\documentclass{article}
\usepackage{amsmath}
\usepackage{amsthm}
\usepackage{amssymb}
\usepackage{stmaryrd}
\usepackage{caption}
\usepackage{enumerate}
\usepackage[lmargin=1in,rmargin=1in,tmargin=1in,bmargin=1in]{geometry}
\usepackage{algorithm}
\usepackage{algpseudocode}
\usepackage{blkarray}
\usepackage{tikz}
\usepackage{cleveref}
\usepackage{authblk}
\usetikzlibrary{automata,positioning}

\newcommand{\period}{{\bf Period}}
\algtext*{EndIf}
\algtext*{EndWhile}

\author[1,2]{Xiaoyu He\thanks{hexiaoyu14@mails.ucas.ac.cn}}
\author[2]{Neng Huang\thanks{huangneng14@mails.ucas.ac.cn}}
\author[1,2]{Xiaoming Sun\thanks{sunxiaoming@ict.ac.cn}}
\affil[1]{Institute of Computing Technology, Chinese Academy of Sciences, China}
\affil[2]{University of Chinese Academy of Sciences, China}

\theoremstyle{plain}
\newtheorem{theorem}{Theorem}[section]
\newtheorem{lemma}[theorem]{Lemma}
\newtheorem{corollary}[theorem]{Corollary}
\newtheorem{prop}[theorem]{Proposition}

\theoremstyle{definition}
\newtheorem{definition}[theorem]{Definition}
\theoremstyle{remark}
\newtheorem{remark}{Remark}

\begin{document}

\title{On the Decision Tree Complexity of String Matching\footnote{This work was supported in part by the National Natural Science Foundation of China Grant 61433014, 61502449, 61602440, and the 973 Program of China Grants No. 2016YFB1000201.}
}
\maketitle

\begin{abstract}
 String matching is one of the most fundamental problems in computer science. A natural problem is to determine the number of characters that need to be queried (i.e. the decision tree complexity) in a string in order to decide whether this string contains a certain pattern.
 Rivest showed that for every pattern $p$,  in the worst case any deterministic algorithm needs to query at least $n-|p|+1$ characters, where $n$ is the length of the string and $|p|$ is the length of the pattern. He further conjectured that this bound is tight.
 By using the adversary method, Tuza disproved this conjecture and showed that more than one half of binary patterns are {\em evasive}, i.e. any algorithm needs to query all the characters (see Section 1.1 for more details).
 
  In this paper, we give a query algorithm which settles the decision tree complexity of string matching except for a negligible fraction of patterns. Our algorithm shows that Tuza's criteria of evasive patterns are almost complete. Using the algebraic approach of Rivest and Vuillemin, we also give a new sufficient condition for the evasiveness of patterns, which is beyond Tuza's criteria. In addition, our result reveals an interesting connection to \emph{Skolem's Problem} in mathematics.

\end{abstract}

\section{Introduction}
The string matching problem is one of the most fundamental problems in computer science. The goal of string matching problem is to find one or all occurrences of a pattern in an input string. Lots of efficient algorithms have been discovered in the 20th century. For example, the KMP algorithm~\cite{knuth_fast_1977}, discovered by Knuth, Morris and Pratt, is able to locate all occurrences of a pattern of length $m$ in a string of length $n$ in $O(n + m)$ time. This is essentially the best possible since every algorithm needs $\Omega(n + m)$ time to process the input. Another elegant algorithm is the Karp-Rabin algorithm~\cite{karp_efficient_1987}, which uses hashing and can be used to search for a set of patterns. A detailed treatment of these algorithms can be found in \cite{cormen_introduction_2001}. However, the problem becomes subtler when we adopt a different complexity measure, which is the number of characters that the algorithm has to examine in the input string given the prior knowledge of the pattern string. When we confine the alphabet to $\{0, 1\}$, this measure is exactly the \emph{decision tree complexity} of boolean string matching problem. Recall that for a binary function $f$, its decision tree complexity is the number of bits that we have to examine in the worst case in any input $x$ in order to compute $f(x)$.

\subsection{Notations and Previous Work}\label{sec:sec_1.1}
Let $p$ be a pattern over alphabet $\Sigma$ with $|\Sigma| = \sigma$. Throughout the paper, let $p[i]$ be the $i$-th character in $p$ and $p[i..j]$ be the substring of $p$ indexed from $i$ to $j$. Let $A_p$ be a deterministic string searching algorithm which searches for $p$ in any given string $s$. Following Rivest~\cite{rivest_worst-case_1977}, we denote $w(A_p, n)$ to be the maximum number of characters that $A_p$ examines for any $s$ of length $n$. Let $D_p(n) = \min_{A_p} w(A_p, n)$, where $A_p$ is taken over all deterministic string searching algorithms. When the alphabet $\Sigma = \{0, 1\}$, $D_p$ is exactly the boolean decision tree complexity of string searching algorithm with pattern $p$. It is clear that this function is monotone, since we can simply add some redundant characters at the end of the searched text. We state this as the following proposition.

\begin{prop}\label{prop:L1}
For every pattern $p$ and $n \in \mathbb{N}$, $D_p(n) \leq D_p(n + 1)$.
\end{prop}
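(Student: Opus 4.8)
The plan is to exhibit, for the given pattern $p$ and length $n$, a deterministic length-$n$ searching algorithm whose worst-case query count does not exceed $D_p(n+1)$. Since $D_p(n)$ is the minimum of $w(A_p,n)$ over all such algorithms, producing even one algorithm with this property immediately gives $D_p(n) \le D_p(n+1)$. The natural device is a reduction that runs an algorithm designed for the longer input as a subroutine. Concretely, I would fix an optimal length-$(n+1)$ algorithm $B$, i.e. one with $w(B,n+1) = D_p(n+1)$, and use it as a black box.

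Given a length-$n$ input $s$, the idea is to pretend we are searching the length-$(n+1)$ string $s' = s\,c$ obtained by appending a single \emph{redundant} character $c$, and to simulate $B$ on $s'$. Every time $B$ queries a position $i \le n$, the simulation reads $s[i]$; whenever $B$ queries position $n+1$, the simulation returns the known value $c$ \emph{without} inspecting any character of $s$. Consequently the number of characters of $s$ actually examined is at most the number of positions $B$ queries on $s'$, which is bounded by $D_p(n+1)$: the appended coordinate is free. What remains is to check that this simulation correctly decides whether $s$ contains $p$, i.e. that the verdict of $B$ on $s'$ coincides with the correct answer for $s$.

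Correctness hinges on choosing $c$ so that padding does not alter the answer. Every occurrence of $p$ lying entirely inside positions $1,\dots,n$ is shared by $s$ and $s'$, so the only way $s'$ could contain $p$ while $s$ does not is via an occurrence ending at the new position $n+1$; such an occurrence would force $s'[n+1] = c = p[|p|]$. Hence I would take $c \neq p[|p|]$, which is possible whenever $\sigma \ge 2$, and this guarantees that $s'$ contains $p$ if and only if $s$ does. Combined with the query bound above, the simulated algorithm is a valid length-$n$ searching algorithm examining at most $D_p(n+1)$ characters, yielding the claim.

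I expect the argument to be essentially routine, with the one genuinely load-bearing point being the choice of the redundant character: a careless padding could manufacture a spurious occurrence terminating at position $n+1$ and thereby break the reduction. The only remaining care is the handful of degenerate regimes, which I would dispose of directly rather than through the reduction: when $\sigma = 1$, or when $|p| > n$, the predicate ``$s$ contains $p$'' is constant over all length-$n$ inputs, so $D_p(n) = 0$ and monotonicity is immediate.
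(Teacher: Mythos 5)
Your proof is correct and is exactly the paper's (one-line) argument: the paper justifies the proposition by noting that one ``can simply add some redundant characters at the end of the searched text,'' which is precisely your padding-and-simulation reduction. You have merely made explicit the details the paper leaves implicit, namely that the appended character must differ from $p[|p|]$ so as not to create a spurious occurrence, and that the unary and short-input cases are trivial.
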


We define the evasiveness of a pattern as follows.
\begin{definition}
A pattern $p$ is called \emph{evasive} if there exists $N_0 \in \mathbb{N}$ such that for all $n > N_0$, $D_p(n) = n$.
\end{definition}

By this definition, a pattern $p$ is evasive if for every algorithm $A$ and every sufficiently large $n$, there is a string $s$ of length $n$ such that $A$ has to query every character of $s$ in order to determine whether $s$ contains $p$ as a substring. We are interested in determining what patterns are evasive and what patterns are not.

Let $|p|$ denote the length of a pattern $p$. Rivest gave the following linear lower bound on $D_p(n)$:
\begin{theorem}[\cite{rivest_worst-case_1977}]\label{thm:rivestthm}
For every pattern $p$, $D_p(n) \geq n - |p| + 1$ for all $n \in \mathbb{N}$.
\end{theorem}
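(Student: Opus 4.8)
The plan is to prove the lower bound by the \emph{adversary method}. Fix $m=|p|$ and an arbitrary deterministic searching algorithm $A$ for texts of length $n$; I will describe an adversary that answers $A$'s character queries so that $A$ cannot legitimately halt until it has read at least $n-m+1$ positions. Write $Q$ for the set of positions queried so far and call the remaining positions \emph{free}. The key observation is that the occurrence of $p$ is a property of a single length-$m$ window, so the information held by $A$ leaves the output undetermined precisely when the revealed characters are consistent both with some completion containing $p$ and with some completion avoiding $p$. The adversary's only job is to keep both alive.

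The invariant I would maintain after answering each query is: (i) there is a window $W=[j,j+m-1]$ all of whose already-queried positions agree with $p$ and which still contains a free position, yielding a \emph{yes}-completion by filling the free cells of $W$ according to $p$ and the rest arbitrarily; and (ii) no fully-queried window agrees with $p$ on all $m$ of its cells, and the remaining ``live'' windows can be simultaneously destroyed by assigning their free cells, yielding a \emph{no}-completion. As long as both (i) and (ii) hold the output is not forced, so $A$ cannot stop. I would then show the invariant is preserved whenever at least $m$ positions remain free after the query, which is exactly the regime $|Q|\le n-m$.

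The crux is the answering rule. When $A$ queries a position $i$, the adversary must pick a symbol that (a) does not complete any window to a full match, preserving (ii), and (b) does not destroy the last matchable window, preserving (i). A window containing $i$ becomes \emph{dangerous} only when all its other $m-1$ cells are already queried and matching; to respect (ii) the adversary must answer off the symbol $p$ demands in each such window. The main obstacle is that two overlapping dangerous windows can demand \emph{different} symbols at $i$, a conflict governed by the periods of $p$ (for instance, for $p=aa$ the naive rule ``always answer a fixed symbol $\neq p[1]$'' fails and forces an adaptive response). I expect the resolution to exploit the slack: with at least $m$ free cells one can reroute the yes-window to a window still possessing a free cell, and, by a short hitting-set argument over the free cells, simultaneously kill every live window, so a safe answer always exists.

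Assembling these pieces, the invariant survives as long as $|Q|\le n-m$, so on the input ultimately produced by the adversary the algorithm $A$ is still undetermined after $n-m$ queries and must query at least once more; hence $w(A_p,n)\ge n-m+1$. Since $A$ was arbitrary, $D_p(n)=\min_{A_p} w(A_p,n)\ge n-|p|+1$ for every $n$, as claimed.
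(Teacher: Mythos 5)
There is a genuine gap at the crux of your argument: the step you defer to ``a short hitting-set argument over the free cells'' is precisely the hard part, and nothing in your sketch supplies it. Two concrete problems. First, your invariant (ii) is not maintained by the condition ``no fully-queried window matches'': a partial assignment can force the answer to \emph{yes} without any window being fully read. For $p=01$ and a text with $s[1]=0$, $s[3]=1$ and $s[2]$ free, neither window is fully queried, yet every completion contains $p$; so the real content of (ii) is the clause ``the remaining live windows can be simultaneously destroyed,'' which is exactly the unproven claim, and it interacts badly with (i) because killing one window at a free cell can complete an overlapping one. Second, the answering rule can be cornered: two overlapping windows $W_0,W_1$ can each have every cell except the queried position $i$ already read and matching, with $W_0$ demanding $0$ at $i$ and $W_1$ demanding $1$; their union has at most $2|p|-1$ cells, so this trap can be sprung well within the first $n-|p|$ queries unless the adversary's \emph{earlier} answers were chosen with foresight governed by the correlation/period structure of $p$. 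Building such an adversary is essentially what Tuza does, and even he succeeds only for patterns satisfying extra combinatorial conditions (no near-occurrence of $p$ inside $p(b)$ or $pp$, as in Theorem~\ref{thm:tuzathm}); your proposal gives no mechanism that works for every pattern, and indeed for patterns like $1^k$ the naive rules you mention fail.

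The route the paper attributes to Rivest is entirely different and sidesteps these difficulties. One first shows that among any $|p|$ consecutive lengths there is an $N$ with $D_p(N)=N$; this is done by a counting/parity argument (if a Boolean function on $N$ variables has decision tree depth less than $N$, the number of its satisfying inputs is even, and the count of length-$N$ strings containing $p$ cannot satisfy this divisibility for $|p|$ consecutive values of $N$). Combining this with the monotonicity $D_p(n)\leq D_p(n+1)$ of Proposition~\ref{prop:L1} then yields $D_p(n)\geq n-|p|+1$ for all $n$. If you wish to pursue the adversary approach instead, you must actually exhibit and verify the safe-answer rule; as it stands, the proposal restates the theorem's difficulty rather than resolving it.
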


To prove this theorem, Rivest showed that for every $n \in \mathbb{N}$, there exists an integer $i$ between 0 and $|p|$ such that $D_p(n + i) = n + i$, then combined with Proposition~\ref{prop:L1}, Theorem~\ref{thm:rivestthm} follows. Based on this result, we define {\em non-evasiveness} as follows.
\begin{definition}
A pattern $p$ is called \emph{non-evasive} if for every $N_0 \in \mathbb{N}$ there exists $n>N_0$ such that $D_p(n) = n - |p| + 1$.
\end{definition}
As discussed above, what Rivest proved in fact implies that it is impossible for a pattern to achieve the lower bound in Theorem~\ref{thm:rivestthm} on consecutive integers, which is the reason we define non-evasiveness in this way. Rivest showed that the pattern $p = 1^k$ (and therefore $0^k$) is non-evasive. He further conjectured that all patterns are non-evasive. However, this conjecture was later disproved by Tuza in~\cite{tuza_worst-case_1982}. We briefly summarize Tuza's work here. Given a string $b$, let $BE(b)$ denote the set of patterns prefixed and suffixed by $b$, but other than $b$. Also, for patterns $u$ and $v$, let $uv$ denote their concatenation. If $p \in BE(b)$, then let $p(b)$ be the string $ubv$ where, $ub = bv = p$. Tuza proved the following result.

\begin{theorem}[\cite{tuza_worst-case_1982}]\label{thm:tuzathm}
Let $p \in BE(b)$. If
\begin{enumerate}
  \item $p(b)$ does not contain a substring $p'$ of length $|p|$ other than prefix or suffix of $p(b)$ such that $p'$ and $p$ differ from each other in at most two characters, and
  \item the pattern $pp$ does not contain a substring $p'$ of length $|p|$ other than prefix or suffix of $pp$ such that $p'$ and $p$ differ from each other in at most four characters,

\end{enumerate}
 and $n \geq |p|(2|p| - |b|)/\gcd(|p|, |b|)$, then $D_p(n) \geq n - k$, where $k = n \mod \gcd(|p|, |b|)$.
\end{theorem}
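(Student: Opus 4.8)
The plan is to prove the lower bound $D_p(n) \ge n-k$ by an adversary argument, in the same spirit as Tuza's construction. Write $m=|p|$, $\ell=|b|$, and $g=\gcd(m,\ell)$; since $b$ is both a prefix and a suffix of $p$, the pattern $p$ has period $d=m-\ell$, and one checks $\gcd(m,d)=g$. To prove $D_p(n)\ge n-k$ it suffices to describe a strategy for an adversary that, playing against any deterministic decision tree, answers the queried characters so that the value of the predicate ``$s$ contains $p$'' stays undetermined as long as at least $k+1$ positions remain unqueried. Concretely, the adversary must always be able to extend its partial answers both to a string containing $p$ and to a string avoiding $p$; this keeps the algorithm from halting until it has queried at least $n-k$ positions, which is exactly the claimed bound, with the threshold on $n$ and Proposition~\ref{prop:L1} handling the remaining lengths.

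The backbone of the strategy is the period-$d$ template $P$ defined by $P[i]=p[((i-1)\bmod d)+1]$, which agrees with $p$ on every length-$m$ window whose starting index is $\equiv 1 \pmod d$. The adversary answers each query by $P$ by default, while maintaining two invariants: (I) no length-$m$ window has been completely queried with all answers matching $p$, so ``yes'' is never forced; and (II) there is always at least one period-aligned window all of whose queried positions match $p$, the \emph{active window}, which can still be completed to a genuine occurrence, so ``no'' is never forced. The hypothesis $n \ge m(2m-\ell)/g$ guarantees that enough period-aligned windows fit inside $[1,n]$ for the active window to be relocated whenever it is threatened.

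The crux is the relocation step. When a query would complete the active window into a full match of $p$, the adversary instead answers that position with a mismatch and shifts the active window to a neighbouring period-aligned slot, recommitting only a bounded number of characters away from the template. Here conditions~(1) and~(2) do the essential work: breaking and moving the window perturbs $P$ by a few characters, and we must ensure these perturbations neither complete some other window (breaking invariant~(I)) nor prevent the new active window from being completable (invariant~(II)). Condition~(1), forbidding a substring of $p(b)$ within Hamming distance $2$ of $p$ at any non-trivial period shift, rules out the spurious near-occurrences created by a single-period relocation; condition~(2), the analogous statement within Hamming distance $4$ inside $pp$, controls the interactions that arise when the active window must jump across a previously queried block, where up to two committed mismatches interact. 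I would carry out a case analysis over the position of the threatening query relative to the boundaries of the already-queried regions, showing in each case that a relocation using at most two (respectively four) altered characters preserves both invariants.

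Finally I would close the counting. The residue $k=n\bmod g$ reflects that the period structure tiles the line in blocks commensurate with $g$: the adversary can defend every position except a terminal remainder of $k$ positions, which may stay unqueried after the answer has already been forced. I would formalize this with a potential equal to the number of still-defensible unqueried positions, verify that each query lowers it by at most one, and conclude that it stays positive until $n-k$ queries have been made. I expect the main obstacle to be precisely the relocation analysis above: proving that in every configuration the adversary can move the active occurrence while breaking the old window with a bounded, Hamming-tolerated set of changes, so that conditions~(1) and~(2) turn out to be exactly strong enough to forbid every spurious match. Making that case analysis exhaustive and matching the tolerances $2$ and $4$ to the one- and two-period interactions is where the real effort lies.
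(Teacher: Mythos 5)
This theorem is not proved in the paper at all: it is quoted verbatim from Tuza's 1982 article and used as a black box (together with the strengthened variant, Theorem~\ref{thm:tuzathm_strengthened}), so there is no in-paper proof to compare your attempt against. Judged on its own, your proposal correctly identifies the general shape of Tuza's argument --- an adversary who answers according to the period-$(|p|-|b|)$ template, keeps one ``active'' period-aligned occurrence completable while never letting any window be fully forced to $p$, and relocates the active occurrence when threatened --- and correctly intuits that conditions (1) and (2) exist to kill spurious near-matches created by one- and two-window interactions.

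However, what you have written is a plan, not a proof, and the gap sits exactly where you yourself locate it. Every quantitatively specific feature of the statement is left underived: you never show why Hamming tolerance $2$ (for $p(b)$) and $4$ (for $pp$) suffice for the relocation case analysis; you never derive the threshold $n \geq |p|(2|p|-|b|)/\gcd(|p|,|b|)$ from the number of period-aligned slots the adversary needs; and the closing potential argument (``each query lowers the potential by at most one'') is asserted rather than verified, even though this is precisely the step that converts the two invariants into the bound $n-k$ with $k = n \bmod \gcd(|p|,|b|)$ rather than some weaker count. In particular, invariant (II) as stated is delicate: after several relocations the adversary has committed mismatching characters at scattered positions, and you must show a completable period-aligned window still exists among the \emph{unqueried} region --- this is where the lower bound on $n$ and the $\gcd$ arithmetic enter, and none of it is carried out. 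Until the relocation case analysis and the counting are made explicit, the proposal cannot be accepted as a proof of the theorem; it is a reasonable reconstruction of the strategy one would find in Tuza's paper, but the ``real effort'' you defer is the entire mathematical content.
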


If a pattern string $p$ satisfies the conditions in Theorem~\ref{thm:tuzathm} and $\gcd(|p|, |b|) = 1$, then one would have $D_p(n) = n$ for all sufficiently large $n$. This implies that $p$ is evasive and therefore serves as a counterexample of Rivest's conjecture. Tuza estimated the proportion of pattern strings which satisfy the conditions in Theorem~\ref{thm:tuzathm} and proved that when $\Sigma = \{0, 1\}$, there exists more than $0.5061\cdot 2^m$ evasive patterns of length $m$.

Beyond the worst case complexity, the average-case complexity has also been studied previously, that is, finding out the numbers of characters that need to be examined on average assuming that the input string is sampled from the uniform distribution. Yao~\cite{yao_complexity_1979} showed that, for almost all patterns of large enough length $m$, an algorithm needs to examine $\Theta(\frac{n \log_q m}{m})$ characters on a uniformly random input string of length $n > 2m$, here $q$ is the size of the alphabet.

\subsection{Our Contributions}
In this paper we settle the decision tree complexity for almost every string except an $o(1)$ fraction. More precisely, we prove that Tuza's lower bound, which is developed combinatorially, is in fact tight for almost every string, by showing an algorithm which achieves this lower bound. This algorithm is based on the periods of the pattern string.

\begin{definition}[Periods]
Let $p$ be a pattern of length $m$ and $k$ be a positive integer no larger than $m$. We say that $p$ is $k$-periodic, or $p$ has a period $k$, if $p[i] = p[i + k]$ for all $1 \leq i \leq n - k$. Let ${\bf Period}(p)=\{k|p\ is\ k\text{-periodic}\}$ be the set of all periods of $p$.
\end{definition}
The definition here is the same as in~\cite{galil_time-space-optimal_1983}, in which it was used to develop a time-space optimal algorithm. A similar idea can also be found in~\cite{tuza_worst-case_1982}. For set $S\subset \mathbb{N}$, let $\gcd(S)$ denote the greatest common divisor of all elements in $S$. Here is our main theorem. 

\begin{theorem}[Main]\label{thm:mainthm}
Let $p$ be a pattern of length $m$ and $c=\gcd({\bf Period}(p))$ be the greatest common divisor of all $p$'s periods, then $D_p(n) = n - (n \mod c)$, except for an $O(m^5\sigma^{-m/2})$ fraction of patterns.
\end{theorem}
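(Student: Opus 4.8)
The plan is to establish the two matching bounds $D_p(n)\le n-(n\bmod c)$ and $D_p(n)\ge n-(n\bmod c)$ for every pattern in a \emph{generic} class, and then to show that the complementary exceptional class has density $O(m^5\sigma^{-m/2})$. I would call $p$ generic if it has no period at most $m/2$ and, for every nonzero shift, contains no near-copy of itself (inside $pp$ and inside $p(b)$) agreeing with $p$ on all but at most four positions — these are the Hamming-distance requirements appearing in Theorem~\ref{thm:tuzathm}. For a generic pattern the smallest period exceeds $m/2$, so all borders are short, ${\bf Period}(p)$ consists of $m$ together with a bounded, well-structured family of large periods, and $c=\gcd({\bf Period}(p))$ is the gcd of that family.

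The lower bound is the first half, and it must go beyond applying Theorem~\ref{thm:tuzathm} to a single border. Already for the (generic) length-$100$ pattern with borders of lengths $4$ and $10$ the periods are $90,96,100$ and $c=2$, whereas every single border $b$ gives $\gcd(m,|b|)\ge 4$, so no one application of Tuza's theorem reaches $n-(n\bmod c)$. I would therefore run the adversary against the entire period lattice at once: keeping a partial assignment consistent with both ``$p$ occurs'' and ``$p$ does not occur'', the adversary answers each query using the fact that a committed region may be translated by any element of the subgroup $c\mathbb{Z}=\langle{\bf Period}(p)\rangle$ without creating or destroying a forced occurrence. Genericity prevents the adversary from being cornered, and an argument in the spirit of Rivest's Theorem~\ref{thm:rivestthm} then forces at least $n-(n\bmod c)$ queries.

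The upper bound is the main algorithmic contribution and, I expect, the hardest step. I would give a period-based scheme that scans the text, maintains the surviving set of candidate occurrences through the periodicity relations of $p$, and provably leaves exactly $n\bmod c$ positions unread. The delicate point is correctness: I must prove that once the read positions are known, membership of $p$ is \emph{determined} regardless of the skipped positions, which is exactly where ``no short period'' and the absence of distance-$\le 4$ near-occurrences are used, to rule out the ambiguous extensions that would otherwise force extra reads. Calibrating the savings to be neither more nor less than $n\bmod c$, so that the algorithm meets the adversary of the previous paragraph, is the crux.

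Finally I would bound the exceptional class. Having a period at most $m/2$ has density $O(\sigma^{-m/2})$. A genericity-violating near-copy at a fixed shift $d$ is controlled by the $\gcd(d,m)$ cyclic classes, so its probability is largest near $d\approx m/2$ and is $O(m^4\sigma^{-m/2})$ after summing over the at most four mismatch positions; a union bound over the $O(m)$ shifts yields $O(m^5\sigma^{-m/2})$, the dominant term. On the generic class the two bounds coincide at $n-(n\bmod c)$, and monotonicity (Proposition~\ref{prop:L1}) confirms that the resulting formula is consistent across $n$.
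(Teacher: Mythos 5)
Your overall architecture matches the paper's: an upper bound and a lower bound that meet at $n-(n\bmod c)$ on a large class, the observation that the lower bound must use \emph{all} borders simultaneously rather than Tuza's single-border statement (your $\gcd$ example is exactly the reason the paper restates Theorem~\ref{thm:tuzathm} in a strengthened multi-border form), and a density estimate of the right shape ($\sigma^{\gcd(d,m)}$ patterns per shift $d$, times $O(m^4)$ mismatch choices, summed over $O(m)$ shifts). The counting step is essentially the paper's Lemmas~\ref{lem:lem1} and~\ref{lem:lem2}.

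However, there is a genuine gap: the two substantive halves are announced but not carried out, and for the upper bound you are missing the key idea. You write that correctness of the skipping algorithm ``is exactly where no short period and the absence of distance-$\le 4$ near-occurrences are used'' and call calibrating the savings ``the crux'' without resolving it. In the paper the upper bound $D_p(n)\le n-(n\bmod c)$ holds \emph{unconditionally}, for every pattern (Lemma~\ref{lem:upperbd}); no genericity is needed. The mechanism is a concrete algorithm that processes the text in blocks of $c$ positions, growing a window $s[i..j]$ leftward while it remains a suffix of $p$ and rightward otherwise; its correctness rests on the single structural fact that any string which is simultaneously a prefix and a suffix of $p$ has length congruent to $0$ modulo $c$ (its complement to $m$ is a period, and $c$ divides every period), so the first occurrence of $p$ is always aligned with the block structure and is detected within $c$ further queries. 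Without this (or an equivalent) construction the upper bound is not established, and the near-occurrence hypotheses you want to invoke there are in fact the hypotheses of the \emph{lower}-bound adversary, not of the algorithm. Similarly, your lower bound replaces the citation of the strengthened Tuza theorem with a from-scratch adversary ``against the entire period lattice''; the intuition is right, but as written it is a plan rather than a proof, and making the adversary consistent against translations by the subgroup $c\mathbb{Z}$ is precisely where the distance-$2$ and distance-$4$ conditions must be deployed in detail.
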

Here, the fraction of patterns is computed in the following way. We first fix a pattern length $m$, and then count the number of patterns of length $m$ that satisfy some certain properties, then compute its ratio to the total number of length-$m$ patterns, which is $\sigma^m$. We then investigate the asymptotic behavior of this ratio as $m$ goes to infinity.

By Theorem~\ref{thm:mainthm}, the fraction of patterns whose decision tree complexity we don't know goes to 0 as the pattern length goes to infinity.

Besides this result, we also use the algebraic approach to show the evasiveness of certain family of patterns, for which Tuza's method does not work. This algebraic approach was first developed by Rivest and Vuillemin~\cite{rivest_generalization_1975}, and we extend it to our problem. Interestingly, we find that this approach reveals a relation between our problem and the \emph{Skolem's Problem}. We also define the \emph{characteristic polynomial} of a pattern, which is again closely related to the pattern's periodic behaviors. This polynomial, besides its application in this problem, is of independent interest on its own.

\section{Upper Bounds}
In this section, we prove one direction of Theorem~\ref{thm:mainthm}, which can be stated as the following lemma.
\begin{lemma}\label{lem:upperbd}
Let $p$ be a pattern of length $m$ and $c=\gcd({\bf Period}(p))$ be the greatest common divisor of all $p$'s periods, then $D_p(n) \leq n - (n \mod c)$.
\end{lemma}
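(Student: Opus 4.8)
The plan is to construct, for each $n$, a single adaptive querying algorithm whose decision tree, along every root-to-leaf branch, leaves at least $t := n \bmod c$ of the $n$ text positions unread; this immediately gives $w(A_p,n) \le n - t$ and hence the claimed bound. The backbone of the argument is a structural fact about occurrences of $p$. Since $c = \gcd(\textbf{Period}(p))$ divides every element of $\textbf{Period}(p)$, any two occurrences of $p$ in a common text whose starting positions lie at distance $d$ with $0 < d < m$ force $d$ to be a period of $p$, so $c \mid d$; in particular two distinct occurrences at distance less than $c$ are impossible. I would first record this lemma together with its immediate corollary: the occurrences that touch the last $t$ positions of $s$ can only begin at the $t$ consecutive positions $n-t-m+2, \dots, n-m+1$, and since these lie in a window of width $t-1 < c$ they pairwise overlap, so \emph{at most one} of them is a genuine occurrence.

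Next I would build the algorithm around this ``at most one tail occurrence'' phenomenon, generalizing the mechanism already visible for $p = 01$ (where $c = 2$ and, for odd $n$, one saves a single probe by first reading the boundary character and then skipping whichever end of the text it rules out). Concretely, the algorithm probes a short boundary region that pins down the residue class mod $c$ of the single candidate tail occurrence. If that candidate is killed, the whole length-$t$ tail is irrelevant and can be skipped; if it survives, then by the alignment corollary it is congruent mod $c$ to, and overlaps, the nearby candidate prefix occurrences, so it precludes all candidates in the other residue classes in that region, and those precluded positions are exactly where the $t$ probes are saved. In either branch the saving is paid for, while the remaining positions are read (each at most once) by any online scan.

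Finally I would assemble the worst-case count: every branch omits $t$ positions, so $D_p(n) \le n - t = n - (n \bmod c)$, the case $c \mid n$ reducing to the trivial bound $D_p(n) \le n$; Proposition~\ref{prop:L1} is available if a monotone patching between consecutive lengths turns out to be convenient. The main obstacle is exactly this branch-wise accounting: one must prove that the adversary can never force the algorithm into a branch reading more than $n - t$ positions, i.e. that $t$ safe skips are always available no matter how the revealed characters come out. This is delicate because several of the $t$ candidate tail starts can have matching \emph{read parts} simultaneously --- these reflect short periods of a \emph{prefix} of $p$ rather than global periods of $p$ --- even though at most one can complete to a full occurrence; disentangling which candidate can survive and showing that the forced or predictable characters line up is where I expect to need the Fine--Wilf periodicity theorem in addition to the alignment lemma, together with a careful charging of each skipped position to a precluded candidate.
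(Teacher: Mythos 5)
Your structural observations are correct --- two occurrences whose starting positions differ by $d$ with $0<d<m$ force $d$ to be a period of $p$, hence $c\mid d$, so at most one genuine occurrence can touch the last $t=n\bmod c$ positions --- but they do not yield the algorithm, and the step you yourself flag as ``the main obstacle'' is in fact the entire content of the lemma. The gap is this: ``at most one tail candidate is genuine'' is a statement about the true string, whereas the algorithm must budget its queries against an adversary who only commits to consistency. If the unique surviving tail candidate's read part remains consistent with $p$, the algorithm must eventually read positions inside the last $t$ cells to resolve it, and the compensating savings you invoke (``it precludes all candidates in the other residue classes in that region'') do not follow: preclusion of a candidate at offset $d$ with $c\nmid d$ requires having read a position where $p[j]\neq p[j+d]$, and nothing in the sketch forces such a position to have been read. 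As you note, several candidates in distinct residue classes mod $c$ can have consistent read parts simultaneously; you do not disentangle this, and the naive fallback (read all of $s[1..n-t]$, then treat the tail) immediately overshoots the budget by one whenever the tail candidate stays alive. So the branch-wise accounting is not established, and the proposal is a plan rather than a proof.

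For contrast, the paper's proof localizes the bookkeeping differently: it gives an explicit left-to-right algorithm anchored at positions $m, m+c, m+2c,\ldots$, which maintains a window $s[i..j]$ and extends it leftward exactly when $s[i..j]$ is a suffix of $p$ and rightward otherwise, charging at most $c$ \emph{new} queries per anchor. The total $m + c\lfloor (n-m)/c\rfloor = n-(n\bmod c)$ then falls out arithmetically (using $c\mid m$), with no need to decide in advance \emph{which} $t$ positions are skipped; the correctness argument is an invariant showing that when the anchor reaches the block containing the right end of the first occurrence, at most $c$ of its characters are still unread (this is where the ``prefix and suffix of $p$ forces length divisible by $c$'' fact --- essentially your alignment lemma --- is actually used). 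If you want to salvage your route, you would need to replace the vague ``boundary probing'' by a concrete adaptive rule and prove the per-branch count; the paper's suffix-test rule is one such rule, and it is not clear a tail-centric rule exists at all.
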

To show this lemma, we will develop an algorithm whose behavior depends on the periods of the pattern string.
\subsection{Non-evasiveness of Bifix-free Patterns}
We first look at the simple case where our pattern is bifix-free.

\begin{definition}
A string $s$ is called a \emph{bifix} of a string $t$ if $s$ is both a prefix and a suffix of $t$. A pattern $p$ is called \emph{bifix-free} if $p$ has no bifix other than itself.
\end{definition}

\begin{remark}[Relations to combinatorics on words]
The concepts of periods and bifixes are also studied in the field of combinatorics on words under possibly different names. Bifixes are usually referred to as \emph{borders} in combinatorics on words, and bifix-free strings are usually called \emph{unbordered words}. For more details from viewpoint of combinatorics on words, see~\cite{berstel_combinatorics_2003}.
\end{remark}

Bifix-free patterns have the following property in terms of periods.
\begin{lemma}\label{lem:bifixprop}
A pattern $p$ of length $n$ has a bifix of length $k < n$ if and only if it is $(n - k)$-periodic. Furthermore, $p$ is bifix-free if and only if it has only one period, which is $n$.
\end{lemma}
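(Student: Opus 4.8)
The plan is to derive both statements directly from the definitions, exploiting the order-reversing correspondence $k \leftrightarrow n-k$ between the length of a bifix of $p$ and an associated period. I would prove the first statement as a pair of implications. Suppose $p$ has a bifix $b$ of length $k < n$. Then $b$ is simultaneously the prefix $p[1..k]$ and the suffix $p[n-k+1..n]$, so that $p[i] = p[(n-k)+i]$ for every $1 \le i \le k$. Writing $q = n-k$, this is exactly the condition $p[i] = p[i+q]$ for all $1 \le i \le n-q$, i.e. $p$ is $q$-periodic, which is $(n-k)$-periodic. Conversely, if $p$ is $(n-k)$-periodic, then setting $q = n-k$ again the periodicity condition $p[i] = p[i+q]$ for $1 \le i \le n-q = k$ shows, using $q+k = n$, that the length-$k$ prefix $p[1..k]$ agrees character by character with the length-$k$ suffix $p[q+1..n] = p[n-k+1..n]$; hence $p[1..k]$ is a bifix of length $k$. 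This settles the equivalence.

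For the second statement I would first observe that every pattern is trivially $n$-periodic: the defining condition $p[i] = p[i+n]$ ranges over $1 \le i \le n-n = 0$ and is therefore vacuous, so $n \in \period(p)$ always. By the first statement, the periods of $p$ lying in $\{1,\dots,n-1\}$ are in bijection (via $q \mapsto n-q$) with the bifixes of $p$ of length $1 \le k \le n-1$, that is, with the nonempty proper bifixes of $p$. Consequently $p$ possesses a period other than $n$ if and only if it possesses a bifix other than itself, and negating this gives $\period(p) = \{n\}$ if and only if $p$ is bifix-free.

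I do not anticipate a substantive obstacle: the lemma is a faithful translation between two equivalent ways of recording the self-overlap structure of $p$. The only points requiring care are the index bookkeeping in the prefix/suffix-versus-period identification (in particular the calculation $q+k=n$), and fixing conventions at the two boundary cases so that everything lines up: bifixes should be understood as nonempty (matching the notion of \emph{unbordered words}), so that the whole-word bifix of length $n$ is the one excluded by ``other than itself'', while the empty prefix/suffix corresponds to the always-present vacuous period $n$ that we keep. Once these conventions are pinned down, both equivalences follow immediately from the definitions.
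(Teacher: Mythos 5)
Your proposal is correct and takes essentially the same route as the paper: both arguments simply unpack the definitions to identify a bifix of length $k$ with a period $n-k$ via the index translation $p[i]=p[i+(n-k)]$, and then derive the bifix-free characterization from this correspondence. Your version is merely more explicit about the index bookkeeping and the boundary conventions (the vacuous period $n$ and the empty/whole-word bifixes), which the paper's one-line proof leaves implicit.
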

\begin{proof}
If a pattern $p$ has a period $k < n$, then $p[1..(n-k)] = p[(k+1)..n]$. This is equivalent to say that $p$ has a prefix of length $n - k$ which is equal to $p$'s suffix of length $n - k$. The ``furthermore'' part follows directly.
\end{proof}

Then, for a bifix-free pattern $p$ we have $|p|=\gcd({\bf Period}(p))$. According to Lemma~\ref{lem:upperbd}, the following result is expected.

\begin{lemma}\label{lem:bifixthm}
Let $p$ be a bifix-free pattern of length m, then $D_p(n) \leq n - (n \mod m)$ and $p$ is non-evasive.
\end{lemma}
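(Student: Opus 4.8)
The plan is to prove the two assertions in turn, with the inequality $D_p(n)\le n-(n\bmod m)$ as the substantive part and non-evasiveness following almost for free. By Lemma~\ref{lem:bifixprop} a bifix-free pattern has $m$ as its only period, so $c=\gcd(\period(p))=m$ and the claimed bound is exactly the instance $c=m$ of Lemma~\ref{lem:upperbd}. Since Lemma~\ref{lem:upperbd} is precisely what the algorithm of this section is being built toward, I will not invoke it but instead give a direct, self-contained query algorithm for the bifix-free case, which then serves as the prototype for the general, period-driven algorithm.

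The engine of the algorithm is a single structural fact about unbordered words. If $i<i'$ with $d:=i'-i<m$ and both windows $s[i..i+m-1]$ and $s[i'..i'+m-1]$ equalled $p$, then on their overlap we would get $p[y+d]=p[y]$ for all $1\le y\le m-d$, i.e.\ $p$ would be $d$-periodic with $1\le d<m$; by Lemma~\ref{lem:bifixprop} this gives $p$ a nontrivial bifix, a contradiction. Hence occurrences of a bifix-free pattern cannot overlap. The same computation yields the stronger, algorithmically useful form: for any two overlapping windows there is a text position in their overlap at which the two alignments of $p$ predict different characters, for otherwise $p$ would again be $d$-periodic. I will call such a coordinate a \emph{disagreement position}; querying it reads a value consistent with at most one of the two alignments, so it eliminates at least one of them.

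Using this tool, the algorithm reads the text while maintaining the set of candidate windows still consistent with what has been read (the \emph{alive} windows). I group candidate windows into clusters of pairwise-overlapping windows, for example all windows that contain a common multiple of $m$; any cluster then contains at most one occurrence. To decide a cluster I repeatedly locate an unread disagreement position between two alive windows and query it, killing at least one window each time, until at most one alive window remains, which I then verify by reading its still-unread positions. Because a cluster can harbour at most one occurrence, I expect to decide it while leaving some of its positions unexamined, the savings coming from positions that belong only to windows that get eliminated. The target is to orchestrate this across all $q=\lfloor n/m\rfloor$ clusters so that the grand total of queried positions is at most $qm=n-(n\bmod m)$.

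The crux, and the step I expect to be hardest, is this global accounting. The naive hope that a cluster of $W$ windows independently saves $W-1$ reads overcounts badly: consecutive clusters share boundary characters, and an occurrence straddling a cluster boundary forces those shared characters to be read, so the savings cannot be localized cluster by cluster — indeed the total saving is only $r=n\bmod m<m$ no matter how many clusters there are. The at-most-one-occurrence property is the lever that bounds the total work, but converting it into the exact budget $n-r$ requires a careful, adaptive argument about the order in which disagreement positions are queried and about which of the final, grid-overshooting positions can be left unread. Once $D_p(n)\le n-(n\bmod m)$ is in hand, non-evasiveness is immediate: for any $N_0$ choose $n>N_0$ with $n\equiv-1\pmod m$, so that $n\bmod m=m-1$ and $D_p(n)\le n-(m-1)=n-|p|+1$; together with Rivest's lower bound $D_p(n)\ge n-|p|+1$ from Theorem~\ref{thm:rivestthm} this forces $D_p(n)=n-|p|+1$ for infinitely many $n$, which is exactly the definition of a non-evasive pattern.
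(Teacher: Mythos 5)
Your structural groundwork is sound: the non-overlap of occurrences and the existence of a disagreement position in any overlap both follow correctly from Lemma~\ref{lem:bifixprop}, and your derivation of non-evasiveness from the inequality (choosing $n\equiv -1\pmod m$ and invoking Theorem~\ref{thm:rivestthm}) is clean — in fact slightly more explicit than the paper's own closing remark. But the proposal does not prove the inequality $D_p(n)\le n-(n\bmod m)$: you explicitly defer the "global accounting" step, and that step is the entire content of the lemma. The generic kill-one-window-per-query bookkeeping you set up does not suffice on its own. A cluster of $m$ pairwise-overlapping windows spans $2m-1$ text positions; eliminating $m-1$ of them costs up to $m-1$ disagreement queries, and verifying the survivor costs up to $m$ more, so without further structure you spend up to $2m-1$ queries on positions that the next cluster will also need, and nothing bounds the total by $\lfloor n/m\rfloor\cdot m$. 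The savings must come from making every disagreement query simultaneously serve as a verification query for the windows that survive, which forces the queried set to be a single contiguous block of exactly $m$ positions per round — and arranging that is the actual algorithm, not an afterthought.

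The paper resolves exactly this by a concrete bidirectional window expansion: start at position $m$ with $i=j=m$, and repeatedly either extend left (query $s[i-1]$) if the current block $s[i..j]$ is a suffix of $p$, or extend right (query $s[j+1]$) otherwise, until the block has length $m$. This queries exactly $m$ contiguous characters and, by a three-case analysis using bifix-freeness (a queried block that is a prefix of $p$ cannot be a proper suffix of $p$), certifies that no occurrence starts at any of the positions $1,\dots,m$ except possibly the final block itself; one then recurses on $s[m+1..n]$. The suffix-of-$p$ test is the mechanism that decides \emph{which} disagreement position to query next so that the queries stay contiguous, and that choice is precisely what your sketch leaves open. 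As written, the proposal identifies the right obstacle but does not overcome it, so the central claim remains unproved.
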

\begin{proof}
Consider the algorithm in Figure~\ref{fig:A1}. We claim that this algorithm can produce the correct output after $n - (n \mod m)$ queries to the string.
\begin{figure}[h]
\centering
\fbox{\parbox{0.8\linewidth}{
  \begin{algorithmic}[1]
    \Require string $s$ of length $n$, bifix-free pattern $p$ of length $m$
    \Ensure whether $p$ is a substring of $s$
    \Function{Find}{$s$, $p$}
    \If {$n < m$}
        \State \Return $false$
    \EndIf
    \State $i \gets m$, $j \gets m$    
    \State query($s[m]$)
    \While {$j - i \neq m - 1$}
        \If {$s[i..j]$ is a suffix of $p$}
            \State query($s[i-1]$), $i \gets i-1$
        \Else
            \State query($s[j+1]$), $j \gets j+1$
        \EndIf
    \EndWhile
    \If {$s[i..j] = p$}
        \State\Return $true$
    \Else
        \State\Return {\textsc{Find}($s[m+1..n]$, $p$)}
    \EndIf
    \EndFunction
  \end{algorithmic}
  }}
\caption{Algorithm for bifix-free patterns}
\label{fig:A1}
\end{figure}
Suppose that in Line 11, we find that $s[i..j]$ is not equal to $p$, otherwise we can stop and output this occurrence. Note that until Line 11 we have only queried $m$ characters in $s$, which are $s[i], s[i + 1], \ldots, s[j]$. We show that for indices $l$ with $1 \leq l \leq m$, we have $s[l..(l+m-1)] \neq p$.
\begin{itemize}
  \item $1 \leq l < i$. In this case, there exists an index $t$ with $i \leq t \leq l + m - 1$ such that $s[t..(l + m - 1)]$ is not a suffix of $p$, since otherwise $s[l + m]$ would not be queried, contradicting the fact that $j = i + m - 1 \geq l + m$. And therefore we have $s[l..(l+m-1)]$ is not suffix of $p$.
  \item $l = i$. In this case we have by assumption that $s[l..(l + m - 1)] = s[i..j] \neq p$.
  \item $i < l \leq m$. Assume that $s[l..(l+m-1)]$ equals to $p$. Then for all indices $t$ with $l \leq t \leq j$, $s[l..t]$ is a prefix of $p$, and therefore by bifix-freeness, is not a suffix of $p$. However, since $i < l$, $s[l-1]$ is queried, so there must exists such an index $t$ that $s[l..t]$ is a suffix of $p$, which is a contradiction. Hence $s[l..(l+m-1)]$ does not equal to $p$.
\end{itemize}
This shows that after querying $m$ characters, we either find an occurrence of $p$ in $s$, or reduce the size of $s$ by $m$. When the size of $s$ is smaller than $m$, the algorithm trivially stops. Therefore after $n - (n \mod m)$ queries, we will be able to determine whether $s$ contains $p$. This establishes an upper bound on $D_p(n)$, namely $D_p(n) \leq n - (n \mod m)$, which matches Rivest's lower bound. We conclude that bifix-free patterns are non-evasive.
\end{proof}

We note that the above algorithm is in fact applicable for all finite alphabets. For an alphabet $\Sigma$ of size $\sigma$, we define $b_m^\sigma$ to be the proportion of bifix-free strings in strings of length $m$, that is,
\[
b_m^\sigma = \frac{|\{p \in \Sigma^m | p \text{ is bifix-free}\}|}{\sigma^m}, |\Sigma| = \sigma.
\]

Nielsen~\cite{nielsen_note_1973} showed that the sequence $\{b_m^\sigma\}_{m=1}^\infty$ converges. Furthermore, he proved that
\[
b_\infty^\sigma := \lim_{m \rightarrow \infty}b_m^\sigma \geq 1 - \sigma^{-1} - \sigma^{-2}.
\]
The following table from \cite{nielsen_note_1973} shows the first three significant digits for $b_\infty^\sigma$ when $\sigma \leq 6$.
\begin{figure}[h]
\centering
\begin{tabular}{|c|c|c|c|c|c|}
  \hline

  $\sigma$ & 2 & 3 & 4 & 5 & 6 \\ \hline
  $b_\infty^\sigma$ & 0.268 & 0.557 & 0.688 & 0.760 & 0.801 \\
  \hline
\end{tabular}
\caption{The table for the first three significant digits for $b_\infty^\sigma$ when $\sigma \leq 6$}
\label{fig:F2}
\end{figure}

From this we obtain that more than 26.7\% of binary pattern strings of length $m$ are non-evasive, where $m$ is sufficiently large. We also note that, as the size of the alphabet increases, the percentage of patterns that are non-evasive tends to 1.

\subsection{The General Case}
In the previous section, we used bifix-freeness as a crucial tool in our algorithm. The property stated in Lemma~\ref{lem:bifixprop} is in fact playing an important role here. It is natural to ask that what if a pattern has periods other than its own length? An intuition is that if a pattern has good periodic behaviors, then a well-behaved algorithm must exist as well. We therefore formalize this intuition and give the proof of Lemma~\ref{lem:upperbd}.

\begin{proof}[Proof of Lemma~\ref{lem:upperbd}]

Let's consider the algorithm in Figure~\ref{fig:A2}, which is a generalization of the algorithm for bifix-free patterns. Intuitively, this algorithm examines the string by blocks of size $c$, which is the greatest common divisor of $p$'s periods. Note that for simplicity we formulate this algorithm in a way that it may query the same character more than once. In such cases, we can reuse the previous result and need not really query that character. Our algorithm might also query a character in $s$ with index larger than $n$. In such cases, we assume that we obtain a character different than $p[m]$, such that it cannot form the pattern $p$ with previous characters. We also assume that $c > 1$.

\begin{figure}[h]

\centering
\fbox{\parbox{0.8\linewidth}{
  \begin{algorithmic}[1]
    \Require string $s$ of length $n$, pattern $p$ of length $m$
    \Ensure whether $p$ is a substring of $s$
    \Function{Find}{$s$, $p$}
    \If {$n < m$}
        \ \Return $false$
    \EndIf
    \State $i \gets m$, $j \gets m$
    \State query($s[m]$)
    \While{$j - i \neq m - 1$}
        \If {$s[i..j]$ is a suffix of $p$}
            \State query($s[i-1]$), $i \gets i-1$
        \Else
            \State query($s[j+1]$), $j \gets j+1$
        \EndIf
    \EndWhile
    \If {$s[i..j] = p$}
        \State\Return $true$
    \EndIf
    \State $l \gets m+c$
    
    \While{$l \leq n$}
    
      \State $i \gets l$, $j \gets l$
      \State query($s[l]$)
      \Repeat
        \If {$s[i..j]$ is a suffix of $p$}
            \State query($s[i-1]$), $i \gets i-1$
        \Else
            \State query($s[j+1]$), $j \gets j+1$
        \EndIf
      \Until $c$ new characters have been queried \textbf{OR} $j - i = m - 1$
      
        \If {$s[(j-m+1)..j] = p$}
            \State\Return $true$
        \EndIf
      \State $l \gets l + c$
    \EndWhile

    \State\Return $false$
    \EndFunction
  \end{algorithmic}
  }}
\caption{Algorithm for general patterns}
\label{fig:A2}
\end{figure}

First of all, it is easy to see that this algorithm queries at most $n - (n \mod c)$ characters in $s$. We now show that this algorithm returns the answer correctly. Our algorithm only returns $true$ when it really see the pattern $p$, so it suffices to show that if there are occurrences of $p$ in $s$, then our algorithm will always be able to find one. Here we prove that it will always find the first occurrence.

Assume that the first occurrence of $p$ in $s$ is $s[k - m + 1..k]$ and $k = hc + t$ for some $0 \leq t < c$. We want to show that, when our algorithm starts to examine the $(hc)$-th character of the string at Line 15 (it could be that our algorithm will be able to locate $p$ in the while loop beginning at Line 5, but that case is even simpler), there are at most $c$ characters in $s[k - m + 1..k]$ which have not been queried yet. If this holds, then our algorithm will be able to identify $s[k-m+1..k]$ as $p$ in at most $c$ queries.

In fact, we prove a strong claim that whenever $k-m < l \leq k$, in order for the repeat-until loop at Line 16-21 to stop, we either either query $c$ new characters in the range $s[k-m+1..k]$, or we have queried every character in the range $s[k-m+1..l+t]$.

Suppose our algorithm is going to query a character with index smaller than $k-m+1$ when $k-m < l \leq k$, then at some point our algorithm will query $s[k - m]$ at Line 18. Clearly, $i = k - m + 1$ at that moment. Also, it must be that $j = l + t$, for when Line 18 is executed, $s[i..j]$ must be a suffix of $p$. But we also know that $s[i..j]$ is a prefix of $p$. Thus the length of $s[i..j]$, which is $j - k + m$, must be a multiple of $c$, implying that $j = l + t$ (since $l$ is always a multiple of $c$). If our algorithm do not query a character with index smaller than $k-m+1$ when $k-m < l \leq k$,  then in order for the repeat-until loop at Line 16 to end, we have all our $c$ new characters in $s[k-m+1..k]$. This proves what we need, and the correctness of our algorithm follows.
\end{proof}

\section{Proof of Theorem~\ref{thm:mainthm}}
In this section, we give the proof of our main theorem. We have proved one direction in Section 2. For the other direction, we use a similar analysis to Tuza's in~\cite{tuza_worst-case_1982}. We first restate (a stronger version of) Tuza's theorem here.
\begin{theorem}[\cite{tuza_worst-case_1982}]\label{thm:tuzathm_strengthened}
Assume that $p \in BE(b_1), p \in BE(b_2), \ldots, p \in BE(b_l)$. If
\begin{enumerate}
  \item for every $1 \leq i \leq l$, $p(b_i)$ does not contain a substring $p'$ of length $|p|$ other than prefix or suffix of $p(b_i)$ such that $p'$ and $p$ differ from each other in at most two characters, and
  \item the pattern $pp$ does not contain a substring $p'$ of length $|p|$ other than prefix or suffix of $pp$ such that $p'$ and $p$ differ from each other in at most four characters,

\end{enumerate}
then for sufficiently large $n$, $D_p(n) \geq n - k$, where $k = n \mod \gcd(\{|p|, |b_1|, |b_2|, \ldots, |b_l|\})$.
\end{theorem}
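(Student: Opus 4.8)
The plan is to establish the lower bound by the adversary method, adapting Tuza's analysis behind Theorem~\ref{thm:tuzathm} from one period to several. Write $c=\gcd(\{|p|,|b_1|,\dots,|b_l|\})$ and $k=n\bmod c$, and recall from Lemma~\ref{lem:bifixprop} that each proper bifix $b_i$ gives $p$ the period $d_i=|p|-|b_i|$, so that $c=\gcd(\{|p|,d_1,\dots,d_l\})$ as well. I would phrase the argument as an oracle game: the adversary answers each query to a coordinate of $s\in\Sigma^n$ while maintaining the invariant that the partial assignment produced so far can be completed to a string containing $p$ \emph{and} to a string not containing $p$. While this invariant holds the algorithm cannot have decided, so it suffices to show that the adversary can preserve it until strictly more than $k$ coordinates remain unqueried; this forces $D_p(n)\ge n-k$.

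The combinatorial core is the behaviour of $p$ under its own overlaps. For each $i$, the word $p(b_i)$ is exactly two copies of $p$ superimposed at offset $d_i$ (it has length $|p|+d_i$ and equals the period-$d_i$ extension of $p$), and $pp$ is two copies at offset $|p|$. Condition~1 for $b_i$ says that no interior length-$|p|$ window of $p(b_i)$ differs from $p$ in at most two characters, and condition~2 says the same for $pp$ with margin four. I would use these as robustness guarantees: they ensure that when the adversary creates or destroys an occurrence of $p$ at one location by fixing a few unqueried coordinates, no \emph{other} window is accidentally turned into a copy of $p$. The margins two and four are exactly the number of coordinates in which the two completions may differ over a window lying inside a single period block, respectively over a window straddling the junction between two blocks; conditions~1 and~2 keep every such non-genuine window at Hamming distance larger than the relevant margin from $p$, so neither completion ever carries a spurious occurrence.

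Granting this, the mechanism is Tuza's sliding argument. The period relations let an occurrence of $p$ be relocated by each $d_i$ (by setting the intervening coordinates to $p(b_i)$) and by $|p|$, hence, through an integer combination realizing $c=\gcd(\{|p|,d_1,\dots,d_l\})$, by the single step $c$. The adversary therefore maintains a ``floating'' potential occurrence: whenever the algorithm threatens to pin it down, the adversary uses the still-unqueried coordinates to shift or toggle it, the robustness from conditions~1 and~2 guaranteeing that this never forces the answer. A bookkeeping argument, run as in Tuza's proof, then shows that the ambiguity can be carried until only the final $k=n\bmod c$ coordinates are left free, so every other coordinate must have been queried; the hypothesis that $n$ be sufficiently large (the explicit threshold $|p|(2|p|-|b|)/\gcd(|p|,|b|)$ in Theorem~\ref{thm:tuzathm}) is what provides enough periodic room for the slides to be carried out.

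The step I expect to be the main obstacle is the passage from a single bifix to several. For one bifix this is Tuza's setting and condition~1 for the associated $b$ suffices; with several periods one cannot simply use a period-$c$ template, because the greatest common divisor of the periods of a word need not itself be a period. One must instead combine slides along the different $d_i$ so that their net effect is a shift by $c$, and verify that this composite operation only ever touches coordinates the adversary is free to set and never produces an unintended occurrence at any of the intermediate offsets. Checking this interaction --- that across all offsets $0<s<|p|$ every interior window stays more than two (or, at a junction, more than four) characters from $p$, uniformly over the several bifixes --- is exactly why condition~1 is now imposed for \emph{every} $b_i$ while condition~2 on $pp$ remains global, and it is the delicate point on which the whole extension rests; once it is in place, the adversary's invariant and the final counting follow as in Tuza's proof.
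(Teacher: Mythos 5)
The paper does not actually prove this statement: it is introduced as ``(a stronger version of) Tuza's theorem'' and attributed wholesale to \cite{tuza_worst-case_1982}, so the only meaningful comparison is with Tuza's original adversary argument, which is indeed what you are trying to reconstruct. Your skeleton --- an adversary maintaining two completions of the partial input (one containing $p$, one not), using the periods $d_i=|p|-|b_i|$ and $|p|$ to slide a potential occurrence, and reading conditions 1 and 2 as Hamming-distance margins that prevent spurious occurrences when a few coordinates are toggled --- is the right shape and consistent with Tuza's method.

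As a proof, however, the proposal has a genuine gap, and you name it yourself without closing it. The entire content of the strengthened statement is that slides along the several offsets $d_1,\dots,d_l$ and $|p|$ can be composed into a net shift by $c=\gcd(\{|p|,d_1,\dots,d_l\})$ while (a) only touching coordinates the adversary is still free to set and (b) never creating an occurrence of $p$ at any intermediate offset in either completion. You write that ``one must verify'' this and that afterwards everything ``follow[s] as in Tuza's proof,'' but no verification is supplied: there is no explicit adversary response rule, no invariant actually checked to survive a single query, no derivation of why margins of exactly two (for each $p(b_i)$) and four (for $pp$) characters suffice when several bifixes interact, and no source for the ``sufficiently large $n$'' threshold in the multi-bifix setting. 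Your own observation that $c$ need not itself be a period of $p$ --- so one cannot simply lay down a period-$c$ template --- is precisely the obstruction the missing argument must overcome, and deferring it to ``as in Tuza's proof'' is circular for a blind reconstruction. What you have is a credible proof plan, not a proof.
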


We note that in Tuza's language, $p \in BE(b)$ essentially means that $p$ has a bifix $b$. As is shown in Lemma~\ref{lem:bifixprop}, it is equivalent to say that $p$ is $(|p| - |b|)$-periodic. Thus the condition $p \in BE(b_1), p \in BE(b_2), \ldots, p \in BE(b_l)$ is simply saying that $p$ has periods $|p| - |b_1|, |p| - |b_2|, \ldots, |p| - |b_l|$, other than its own length $|p|$, and the expression $\gcd(\{|p|, |b_1|, |b_2|, \ldots, |b_l|\}$ is equivalent to $\gcd({\bf Period}(p))$. To prove Theorem~\ref{thm:mainthm}, we need the following two lemmas. These two lemmas are generalizations of Lemma 11 and Lemma 12 in~\cite{tuza_worst-case_1982}.

\begin{lemma}\label{lem:lem1}
Let $B_1(n)$ be the set of patterns $p$ such that $|p| = n$, $p \in BE(b)$ for some $b$ and $p(b)$ contains a substring $p'$ of length $n$ other than prefix or suffix of $p(b)$ such that $p'$ and $p$ differ from each other in at most two characters. Then $|B_1(n)| = O(n^4\sigma^{n/2})$.
\end{lemma}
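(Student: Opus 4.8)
The plan is to prove the estimate by a union bound over a small family of structural ``witnesses,'' and, for each witness, to bound the number of compatible patterns by a degrees-of-freedom count. First I would reparametrize the defining condition. By Lemma~\ref{lem:bifixprop}, $p \in BE(b)$ is the same as saying $p$ has a period $d := n - |b|$ with $1 \le d \le n-1$, and then $p(b)$ is simply the period-$d$ word of length $n + d$ whose first $n$ symbols are $p$. Every length-$n$ substring of $p(b)$ occurs at some offset $\delta \in \{0,1,\ldots,d\}$, the extreme values $\delta = 0$ and $\delta = d$ giving exactly the excluded prefix and suffix; so a witness is, to begin with, a pair $(d,\delta)$ with $1 \le \delta \le d-1$. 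Using that $p(b)$ has period $d$ and that its first $n$ symbols are $p$, the requirement ``the offset-$\delta$ substring $p'$ differs from $p$ in at most two positions'' rewrites as $p[i] = p[i+\delta]$ for every $i$, where the index $i+\delta$ is read modulo the period once it exceeds $n$, and with at most two exceptional indices $i$.

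Second, I would pass to the fundamental domain of the period. Since $p$ has period $d$, it is determined by its prefix $q := p[1..d]$, which I view as a cyclic word indexed by $\mathbb{Z}/d\mathbb{Z}$; under this identification the shift constraints above become $q_r = q_{r+\delta}$ for all $r \in \mathbb{Z}/d\mathbb{Z}$, i.e. $q$ is invariant under cyclic rotation by $\delta$. The key bookkeeping observation is that a residue $r$ with $q_r \ne q_{r+\delta}$ forces a genuine mismatch at every one of the (at least one) length-$n$ positions congruent to $r$; hence the number of violated residues is at most the number of differing characters, i.e. at most two. A witness is therefore really a triple $(d,\delta,R)$ with $R \subseteq \mathbb{Z}/d\mathbb{Z}$ and $|R| \le 2$ recording the violated residues, and there are $O(n)\cdot O(n)\cdot O(n^2) = O(n^4)$ of them.

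Third, for a fixed witness I would count the compatible $q$. Rotation by $\delta$ splits $\mathbb{Z}/d\mathbb{Z}$ into $g := \gcd(d,\delta)$ orbits, each a single cycle under the equalities $q_r = q_{r+\delta}$; a compatible $q$ must be constant along each cycle except that the at most two broken equalities indexed by $R$ cut at most two of these cycles. Since breaking $t$ edges of a cycle creates at most $\max(t,1)$ independent values, the number of free coordinates of $q$ is at most $g + 2$, so each witness contributes at most $\sigma^{g+2}$ patterns. The proof is then completed by the elementary inequality $g \le d/2 \le (n-1)/2$: indeed $g$ divides both $\delta$ and $d-\delta$, both of which lie in $[1,d-1]$, so $g \le \min(\delta, d-\delta)$. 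Thus each witness contributes $\sigma^{g+2} = O(\sigma^{n/2})$, and summing over the $O(n^4)$ witnesses gives $|B_1(n)| = O(n^4\sigma^{n/2})$.

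I expect the main obstacle to be the reduction in the second paragraph: carefully checking that the offset-$\delta$ comparison, including its wrap-around case where $i+\delta$ exceeds $n$, corresponds exactly to cyclic rotation by $\delta$ on the period-$d$ domain, and that ``at most two differing characters in the length-$n$ comparison'' really does translate into ``at most two broken constraints on the cyclic domain.'' Once this translation is in place, the orbit counting and the divisibility bound $g \le d/2$ are short, and the polynomial witness count is immediate.
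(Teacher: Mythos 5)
Your proof is correct and follows essentially the same route as the paper's: both reduce to counting, for each choice of period $d$, offset $\delta$, and at most two error locations, the patterns compatible with the resulting near-periodicity, and both bound this count by $\sigma^{\gcd(d,\delta)+O(1)}$ using $\gcd(d,\delta)\le n/2$. Your cyclic-orbit bookkeeping just makes explicit the paper's assertion that the first $\gcd(i,n-|b|)$ characters determine $p(b)$ once the error data is fixed, and your direct bound $\gcd(d,\delta)\le d/2$ lets you bypass the paper's preliminary reduction (Lemma~\ref{lem:prop1}) to bifixes of length at most $n/2$.
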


\begin{lemma}\label{lem:lem2}
Let $B_2(n)$ be the set of patterns $p$ such that $|p| = n$ and the pattern $pp$ contains a substring $p'$ of length $n$ other than prefix or suffix of $pp$ such that $p'$ and $p$ differ from each other in at most four characters. Then $|B_2(n)| = O(n^5\sigma^{n/2})$.
\end{lemma}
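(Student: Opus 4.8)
The plan is to first translate the combinatorial condition defining the set of bad patterns into a statement about cyclic rotations of $p$, and then count by a union bound. The key observation is that every length-$n$ substring of $pp = p[1..n]\,p[1..n]$ other than the prefix $p[1..n]$ and the suffix $p[n+1..2n]$ is obtained by starting at a position $s$ with $2 \le s \le n$, and such a substring is precisely the cyclic rotation $\rho_d(p) := (p[d+1], \ldots, p[n], p[1], \ldots, p[d])$ of $p$ by $d = s-1 \in \{1, \ldots, n-1\}$. Hence $p \in B_2(n)$ if and only if there is some $d$ with $1 \le d \le n-1$ for which the Hamming distance between $p$ and $\rho_d(p)$ is at most $4$; equivalently, $p[i] = p[i+d \bmod n]$ for all but at most four indices $i$.

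Next I would set up the count via a union bound over $d$ and over the set $S$ of ``exceptional'' indices where $p$ and $\rho_d(p)$ are allowed to disagree. For a fixed $d$ and a fixed $S \subseteq \{1,\ldots,n\}$ with $|S| \le 4$, I would count the patterns $p$ satisfying $p[i] = p[i + d \bmod n]$ for every $i \notin S$. These equality constraints are exactly the edges $\{i,\, i+d \bmod n\}$, for $i \notin S$, of the functional graph of the shift $i \mapsto i+d \pmod n$, and a pattern is consistent with them if and only if it is constant on each connected component. The full shift graph is a disjoint union of $\gcd(d,n)$ cycles (each of length $n/\gcd(d,n)$), so it has exactly $\gcd(d,n)$ components; deleting the $\le 4$ edges indexed by $S$ increases the number of components by at most $|S| \le 4$. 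Therefore the number of consistent patterns is at most $\sigma^{\gcd(d,n)+4}$, which is $O(\sigma^{\gcd(d,n)})$ since the alphabet size $\sigma$ is a constant.

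Summing over the $\sum_{j=0}^{4}\binom{n}{j} = O(n^4)$ choices of $S$ and over the $n-1$ choices of $d$, and using the elementary fact that for $1 \le d \le n-1$ the value $\gcd(d,n)$ is a proper divisor of $n$ and hence at most $n/2$, I would obtain
\[
|B_2(n)| \;\le\; \sum_{d=1}^{n-1}\left(\sum_{j=0}^{4}\binom{n}{j}\right)\sigma^{\gcd(d,n)+4} \;=\; O(n^4)\sum_{d=1}^{n-1}\sigma^{\gcd(d,n)} \;\le\; O(n^4)\cdot(n-1)\,\sigma^{n/2} \;=\; O(n^5\sigma^{n/2}),
\]
as claimed. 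One could even remove the final factor of $n$ by grouping the values of $d$ according to $\gcd(d,n)=g$ and using that only $\phi(n/g)\le n/g$ of them share each value, but the crude bound $\sigma^{\gcd(d,n)}\le\sigma^{n/2}$ for each $d$ already suffices.

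The main obstacle I anticipate is the component-counting step: one must argue carefully that imposing only the agreement constraints while leaving the $\le 4$ exceptional positions free yields at most $\sigma^{\gcd(d,n)+4}$ patterns, i.e. that deleting $k$ edges from a disjoint union of cycles raises the component count by at most $k$ (deleting the first edge of a cycle merely turns it into a path). The remaining ingredients---identifying the internal substrings of $pp$ with nontrivial cyclic rotations, the union bound, and the bound $\gcd(d,n)\le n/2$---are routine, and the point that keeps the exponent of $\sigma$ at $n/2$ is precisely that the constant factor $\sigma^4$ coming from the four free positions is absorbed into the $O(\cdot)$ because $\sigma$ is fixed.
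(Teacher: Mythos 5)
Your proposal is correct and follows essentially the same route as the paper: identify the internal length-$n$ substrings of $pp$ with the cyclic rotations $\rho_d(p)$, take a union bound over the offset $d$ and the $O(n^4)$ choices of mismatch positions, and bound the surviving patterns by $\sigma^{\gcd(d,n)+O(1)}$ with $\gcd(d,n)\le n/2$. The only difference is cosmetic: where the paper asserts that, once the $\le 4$ erroneous characters are fixed, the first $\gcd(i,n)$ characters determine $p$, you justify the same $\sigma^{\gcd(d,n)+4}$ count rigorously by counting connected components of the shift graph after deleting at most four edges.
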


\begin{proof}[Proof of Theorem~\ref{thm:mainthm}]
Let $p$ be a pattern of length $m$. If $p \notin B_1(m) \cup B_2(m)$, then by Theorem~\ref{thm:tuzathm_strengthened}, $D_p(n) \geq n - k$, where $k = n \mod \gcd(\period(p))$. Also, by Lemma~\ref{lem:upperbd}, $D_p(n) \leq n - k$. Therefore $D_p(n) = n - k$ for all $p \notin B_1(m) \cup B_2(m)$. By Lemma~\ref{lem:lem1} and Lemma~\ref{lem:lem2}, $|B_1(m) \cup B_2(m)| = O(m^5\sigma^{m/2})$, and hence Theorem~\ref{thm:mainthm} follows.
\end{proof}

For simplicity, from now on we say $p(b)$ \emph{has property 1} if $p \in BE(b)$ and $p(b)$ contains a substring $p'$ of length $n$ other than prefix or suffix of $p(b)$ such that $p'$ and $p$ differ from each other in at most two characters, and we say $p$ \emph{has property 2} if the pattern $pp$ contains a substring $p'$ of length $n$ other than prefix or suffix of $pp$ such that $p'$ and $p$ differ from each other in at most four characters.

\subsection{Proofs of the Two Lemmas}
Now we prove Lemma~\ref{lem:lem1} and Lemma~\ref{lem:lem2}. Tuza proved the case when $\Sigma = \{0, 1\}$ in~\cite{tuza_worst-case_1982}. We will adapt his proof to handle the case where $\Sigma$ is any finite alphabet.

\begin{lemma}\label{lem:prop1}
If $p(b)$ has property 1 for some $|b| > |p|/2$, then we can find $b'$ with length at most $|p| / 2$ such that $p(b')$ has property 1 as well.
\end{lemma}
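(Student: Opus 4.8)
The plan is to translate the hypothesis into the language of periods and then replace the large bifix by a suitable multiple of the underlying period. By Lemma~\ref{lem:bifixprop}, the assumption $|b| > |p|/2$ says exactly that $p$ has the small period $d := |p|-|b| < |p|/2$. First I would record the elementary closure property that a $d$-periodic word is also $(kd)$-periodic whenever $kd \le |p|$ (simply iterate $p[i]=p[i+d]$). Taking $e$ to be the largest multiple of $d$ strictly below $|p|$, the gap condition $d < |p|/2$ forces $e \ge |p|-d > |p|/2$; hence $p$ is $e$-periodic and, again by Lemma~\ref{lem:bifixprop}, has a bifix $b'$ of length $|p|-e < |p|/2$ with $p \in BE(b')$. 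This $b'$ will be the desired shorter bifix.

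Next I would pin down the geometry relating $p(b)$ and $p(b')$. Writing $w = p[1..d]$, the $d$-periodicity of $p$ makes $p$ a prefix of the infinite periodic word $w^{\omega}$. Since $ub = bv = p$, one checks directly that $p(b) = p\cdot p[|p|-d+1..|p|]$, and similarly $p(b') = p\cdot p[|p|-e+1..|p|]$; that is, each extension appends to $p$ its own suffix of the corresponding length. Because $d \mid e$, these appended suffixes coincide with the continuation prescribed by the period, so both $p(b)$ and $p(b')$ are prefixes of $w^{\omega}$, of lengths $|p|+d$ and $|p|+e$ respectively. As $d < e$, this exhibits $p(b)$ as a strict prefix of $p(b')$.

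Finally I would transfer the witness. Property 1 of $p(b)$ provides a length-$|p|$ substring $p'$ of $p(b)$, differing from $p$ in at most two characters, that is neither a prefix nor a suffix of $p(b)$. Since $p(b)$ is a prefix of $p(b')$, this same $p'$ occurs at the same position in $p(b')$; it does not begin at position $1$ (as it is not a prefix of $p(b)$), so it is not a prefix of $p(b')$, and it lies entirely inside the strict prefix $p(b)$, so it ends before the last position of $p(b')$ and is not a suffix of $p(b')$. Hence $p'$ witnesses property 1 for $p(b')$, with $|b'| < |p|/2$, completing the reduction.

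The one step carrying real content is the structural claim that $p(b)$ and $p(b')$ are nested prefixes of the common periodic word $w^{\omega}$, which is exactly where the divisibility $d\mid e$ is used; everything else is routine — the existence of the right multiple $e$ is a one-line pigeonhole on the gap $d<|p|/2$, and the transfer of $p'$ is a bookkeeping check of the prefix/suffix conditions. The only side condition to confirm is that $b'$ is a genuine proper bifix, so that $p\in BE(b')$ and $p(b')$ is well defined; this holds because $0<|b'|<|p|$.
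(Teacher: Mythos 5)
Your proof is correct and takes essentially the same route as the paper's: both choose $b'$ to be the bifix corresponding to the largest multiple $e = kd$ of the period $d = |p| - |b|$ lying below $|p|$, and both reduce the claim to the containment of $p(b)$ inside $p(b')$ so that the witness substring $p'$ transfers. Your $w^{\omega}$-prefix argument simply spells out the step the paper labels ``straightforward to check.''
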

\begin{proof}
If $p \in BE(b)$ for some $|b| > |p|/2$, then by definition, $p[i] = p[i + |p| - |b|]$ for every $1 \leq i \leq |b|$. Assume that $k(|p| - |b|) < |p| \leq (k + 1)(|p| - |b|)$ for some $k$, then let $b' = p[k(|p| - |b|) + 1..|p|]$. It is straightforward to check that $b'$ is also a bifix of $p$ and $p(b')$ contains $p(b)$ as a substring. Therefore $p(b')$ has property 1 if $p(b)$ has property 1.
\end{proof}

\begin{proof}[Proof of Lemma~\ref{lem:lem1}]
Let $p \in B_1(n)$. By definition and Lemma~\ref{lem:prop1}, for some $|b| \leq |p| / 2$, $p(b)$ contains a substring $p'$ of length $n$ other than prefix or suffix of $p(b)$ such that $p'$ and $p$ differ from each other in at most two characters. These at most two characters can be chosen in $(\sigma - 1)^2n(n - 1)/2 + (\sigma - 1)n + 1$ different ways. Assume that $p'$ starts in the $(i + 1)$-th character in $p(b)$, then after we fix these two erroneous locations, the first $\gcd(i, n - |b|)$ characters in $p(b)$ will uniquely determine $p(b)$. Therefore we have
\begin{align*}
|B_1(n)| & \leq \sum_{|b| = 1}^{n/2}\sum_{i=1}^{n - |b| - 1}((\sigma - 1)^2n(n - 1)/2 + (\sigma - 1)n + 1)\sigma^{\gcd(i, n - |b|)} \\
& \leq n^2\sigma^2\sum_{|b| = 1}^{n/2}\sum_{i=1}^{n - |b| - 1}\sigma^{\gcd(i, n - |b|)} \\
& \leq n^2\sigma^2\cdot\frac{n}{2}\cdot n\sigma^{n/2} \\
& = \frac{n^4}{2}\sigma^{n/2 + 2}.
\end{align*}
\end{proof}

\begin{proof}[Proof of Lemma~\ref{lem:lem2}]
The proof is similar to that of Lemma~\ref{lem:lem1}. Let $p \in B_2(n)$. Then the pattern $pp$ has a substring $p'$ that differs from $p$ in at most four characters. These at most four characters can be chosen in at most $|\Sigma|^4n^4$ different ways. Assume that $p'$ starts in the $(i + 1)$-th position, then the first $\gcd(i, n)$ characters in $p$ uniquely determines $p$. Therefore we have
\[
|B_2(n)| \leq \sum_{i=1}^{n - 1}\sigma^4n^4\sigma^{\gcd(i, n)} \leq \sum_{i=1}^{n - 1}\sigma^4n^4\sigma^{n/2} \leq n^5\sigma^{n/2 + 4}.
\]
\end{proof}

\section{A Sufficient Condition for Evasiveness}

Now that we have finished the proof of Theorem~\ref{thm:mainthm}, a natural question to ask is what patterns lie outside the scope of Theorem~\ref{thm:mainthm}? Rivest has given an example in \cite{rivest_worst-case_1977}, by showing that the pattern $1^n$ is non-evasive while $\gcd(\period(1^n)) = 1$ since every integer between 1 and $n$ is a period of $1^n$. In this section, we will use the algebraic method to develop a new sufficient condition for evasiveness. We will assume that the alphabet $\Sigma = \{0, 1\}$. We first introduce the notion of characteristic polynomial in Section~\ref{sec:subsec1} and then state our theorem in Section~\ref{sec:subsec2}. In Section~\ref{sec:subsec3}, we will show the relationship between a pattern's periods and its characteristic polynomial, which allows for a convenient way to calculate the polynomial.

\subsection{The KMP Automaton and the Transition Matrix}\label{sec:subsec1}
Following Rivest~\cite{rivest_worst-case_1977} we will make use of the finite state automaton constructed by the Knuth-Morris-Pratt algorithm. Let $p$ be a pattern string of length $m$, then the automaton constructed will have $m + 1$ states, where state $q_1$ is the initial state and state $q_{m + 1}$ is the only accepting state. The automaton reaches state $q_i$ if the previous $i-1$ characters is a prefix of $p$ where $i$ is the largest possible among such ones, and the pattern $p$ is not found already. The automation reaches state $q_{m+1}$ as soon as the pattern $p$ is found, and stays there forever. See Figure~\ref{fig:F1} for an example of the KMP automaton when the pattern $p = 1010$.

\begin{figure}[!h]
\centering
\begin{tikzpicture}[shorten >=1pt,node distance=2cm,on grid,auto]
   \node[state,initial] (q_1)   {$q_1$};
   \node[state] (q_2) [right=of q_1] {$q_2$};
   \node[state] (q_3) [right=of q_2] {$q_3$};
   \node[state] (q_4) [right=of q_3] {$q_4$};
   \node[state,accepting](q_5) [right=of q_4] {$q_5$};
    \path[->]
    (q_1) edge [loop above] node {0} ()
          edge  node  {1} (q_2)
    (q_2) edge  node  {0} (q_3)
          edge [loop above] node {1} ()
    (q_3) edge [bend left] node  {0} (q_1)
          edge  node  {1} (q_4)
    (q_4) edge  node  {0} (q_5)
          edge [bend right] node [above] {1} (q_2)
    (q_5) edge [loop above] node {0, 1} ();
\end{tikzpicture}
\caption{The finite state automaton for pattern $p = 1010$}
\label{fig:F1}
\end{figure}

Let $U_p(n, i)$ be the set of strings of length $n$ on which the automaton ends in state $q_i$. Let $g_p(n, i) := \sum_{s \in U_p(n, i)}x^{wt(s)}$, where $wt(s)$ is the number of 1's in $s$. The following lemma is used in \cite{rivest_generalization_1975} to show evasiveness of boolean functions.

\begin{lemma}[\cite{rivest_generalization_1975}]\label{lem:lem21}
If $D_p(n) \leq n - l$ for some integer $1 \leq l \leq n$, then $(x+1)^l$ divides $g_p(n, m + 1)$.
\end{lemma}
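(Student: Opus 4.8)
The plan is to recognize $g_p(n,m+1)$ as the weight enumerator of the set of $1$-inputs of the Boolean function $f=f_p$ that decides whether a length-$n$ string contains $p$. Since $q_{m+1}$ is the unique accepting state, $U_p(n,m+1)$ is precisely the set of strings of length $n$ containing $p$, so $g_p(n,m+1)=\sum_{s:\,f(s)=1}x^{wt(s)}$. The hypothesis $D_p(n)\le n-l$ then provides a deterministic decision tree $T$ computing $f$ in which every root-to-leaf path queries at most $n-l$ positions. I would exploit the fact that $T$ partitions $\{0,1\}^n$ into subcubes, one per leaf, and that each $1$-leaf contributes a clean factor to the enumerator.

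Concretely, I would first fix such a tree $T$ of depth at most $n-l$. Each leaf $\ell$ is reached by a path that fixes a set $Q_\ell$ of coordinates to specific values, with $|Q_\ell|\le n-l$, leaving the remaining $n-|Q_\ell|\ge l$ coordinates free; the inputs reaching $\ell$ form the subcube cut out by these constraints, and $f$ is constant on that subcube, equal to the label of $\ell$. Consequently the $1$-inputs of $f$ are exactly the disjoint union of the subcubes attached to the leaves labelled $1$.

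Next I would compute the weight enumerator of a single subcube. If a leaf fixes $a$ of its queried coordinates to $1$ and the other $|Q_\ell|-a$ to $0$, then summing $x^{wt(s)}$ over its subcube factorizes coordinatewise: each fixed $1$ contributes $x$, each fixed $0$ contributes $1$, and each of the $n-|Q_\ell|$ free coordinates contributes $(1+x)$. Hence the subcube's contribution is $x^{a}(1+x)^{\,n-|Q_\ell|}$, and summing over the $1$-leaves gives $g_p(n,m+1)=\sum_{\ell:\,\text{label }1}x^{a_\ell}(1+x)^{\,n-|Q_\ell|}$. Since $n-|Q_\ell|\ge l$ for every leaf, the factor $(x+1)^l$ divides each summand, and therefore divides the whole sum, which is the desired conclusion.

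As for the main obstacle: the argument is genuinely short once the subcube decomposition is in place, so the real content lies in the two identifications above — that $g_p(n,m+1)$ is the $1$-input weight enumerator of $f$, and that a depth-$(n-l)$ tree writes this enumerator as a sum of terms each divisible by $(x+1)^l$. The only point requiring a little care is the bookkeeping that the leaf subcubes are genuinely disjoint and cover all accepting inputs, so that nothing is double-counted or missed; this follows from the standard fact that every input traces a unique root-to-leaf path in a decision tree. No combinatorics-on-words input is needed here, as this is purely the Rivest--Vuillemin counting argument specialized to the string-matching automaton.
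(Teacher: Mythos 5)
Your proof is correct. The paper does not prove this lemma itself (it is quoted from Rivest--Vuillemin), and your argument --- identifying $g_p(n,m+1)$ with the weight enumerator of the $1$-inputs of the containment function and decomposing it over the leaf subcubes of a depth-$(n-l)$ decision tree, each contributing $x^{a_\ell}(1+x)^{n-|Q_\ell|}$ with $n-|Q_\ell|\ge l$ --- is exactly the standard argument from that source, with no gaps.
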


A useful consequence of this lemma is the following corollary.

\begin{corollary}
If there exists $ N_0 \in \mathbb{N}$ such that $g_p(n, m + 1) \not\equiv 0 \mod (x + 1)$ for all $n > N_0$, then $D_p(n)=n$, i.e. $p$ is evasive.
\end{corollary}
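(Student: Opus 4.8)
The plan is to derive the corollary directly from Lemma~\ref{lem:lem21} by contraposition, treating the divisibility condition as the bridge. The key observation is that the statement ``$g_p(n, m+1) \not\equiv 0 \bmod (x+1)$'' is precisely the statement that $(x+1)$ does \emph{not} divide the polynomial $g_p(n, m+1)$, i.e.\ that $(x+1)^1 \nmid g_p(n, m+1)$. So I would first restate the hypothesis in this divisibility language to line it up cleanly with the conclusion of Lemma~\ref{lem:lem21}.

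Next, I would fix an arbitrary $n > N_0$ and argue by contradiction. Suppose $D_p(n) < n$, that is $D_p(n) \leq n - 1$. Applying Lemma~\ref{lem:lem21} with $l = 1$ (which is a legitimate choice since $1 \leq 1 \leq n$), we immediately conclude that $(x+1)^1 = (x+1)$ divides $g_p(n, m+1)$, equivalently $g_p(n, m+1) \equiv 0 \bmod (x+1)$. But this contradicts the hypothesis that $g_p(n, m+1) \not\equiv 0 \bmod (x+1)$ for every $n > N_0$. Hence $D_p(n) = n$ for all $n > N_0$.

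Finally, I would check that ``$D_p(n) = n$ for all $n > N_0$'' is exactly the definition of evasiveness given earlier in the excerpt, so the conclusion ``$p$ is evasive'' follows with nothing further to prove.

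This argument is essentially immediate once Lemma~\ref{lem:lem21} is in hand, so I do not anticipate a genuine technical obstacle; the only point requiring a moment of care is the correct instantiation $l = 1$ and the translation between the ``$\not\equiv 0 \bmod (x+1)$'' phrasing of the hypothesis and the ``$(x+1) \nmid g_p(n,m+1)$'' phrasing needed to invoke the contrapositive of the lemma. In a fully written proof I would spell out that step explicitly so that the reader sees the logical pivot, but no nontrivial computation or auxiliary construction is needed.
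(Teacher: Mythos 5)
Your proposal is correct and is exactly the argument the paper intends: the corollary is the contrapositive of Lemma~\ref{lem:lem21} instantiated at $l=1$, and the paper leaves this implicit (it states the corollary without proof as a "useful consequence"). Your instantiation $l=1$, the translation between $\not\equiv 0 \bmod (x+1)$ and $(x+1)\nmid g_p(n,m+1)$, and the appeal to the definition of evasiveness are all exactly right, so nothing further is needed.
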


By Lemma~\ref{lem:lem21}, we are only interested in the value of $g_p(n, m + 1)$ modulo $x + 1$. Note that we always have
\[
g_p(n + 1, m + 1) = (x + 1)g_p(n, m + 1) + y\cdot g_p(n, m),
\]
where $y$ equals $1$ or $x$ depending on the last bit of the pattern string. Taking modulus of $(x+1)$ on both sides, we obtain
\[
g_p(n + 1, m + 1) \equiv y \cdot g_p(n, m) \mod (x+1).
\]
Since $y \equiv \pm 1 \mod (x+1)$ (with the sign determined by the last bit of the pattern string), we obtain the following lemma.

\begin{lemma}\label{lem:lem23}
$g_p(n + 1, m + 1) \equiv 0 \mod (x+1)$ if and only if $g_p(n, m) \equiv 0 \mod (x+1)$. Moreover, if there exists $ N_0 \in \mathbb{N}$ such that $g_p(n, m) \not\equiv 0 \mod (x + 1)$ for all $n > N_0$, then $p$ is evasive.
\end{lemma}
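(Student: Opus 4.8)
The plan is to build directly on the reduced recursion established immediately above the statement, namely $g_p(n+1, m+1) \equiv y \cdot g_p(n, m) \pmod{x+1}$ with $y \in \{1, x\}$. The whole argument rests on the observation that $y$ is a \emph{unit} in the ring obtained by reducing $\mathbb{Z}[x]$ modulo $(x+1)$: if the last bit of $p$ is $0$ then $y = 1$, and if it is $1$ then $y = x \equiv -1 \pmod{x+1}$, so in either case $y \equiv \pm 1 \pmod{x+1}$. Since $\pm 1$ are invertible, multiplication by $y$ cannot change whether a polynomial is divisible by $(x+1)$.

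First I would prove the biconditional. Divisibility of a polynomial $f$ by $(x+1)$ is equivalent to $f(-1) = 0$, i.e.\ evaluation at the root of $x+1$. Evaluating the reduced recursion at $x = -1$ gives $g_p(n+1, m+1)\big|_{x=-1} = (\pm 1)\, g_p(n, m)\big|_{x=-1}$, and since the scalar $\pm 1$ is nonzero this vanishes if and only if $g_p(n, m)\big|_{x=-1} = 0$. Equivalently, $g_p(n+1, m+1) \equiv 0 \pmod{x+1}$ if and only if $g_p(n, m) \equiv 0 \pmod{x+1}$, which is the first assertion.

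For the ``moreover'' part I would take the contrapositive of the biconditional: $g_p(n, m) \not\equiv 0 \pmod{x+1}$ if and only if $g_p(n+1, m+1) \not\equiv 0 \pmod{x+1}$. Hence if $g_p(n, m) \not\equiv 0 \pmod{x+1}$ holds for every $n > N_0$, then $g_p(n+1, m+1) \not\equiv 0 \pmod{x+1}$ holds for every $n > N_0$, that is, $g_p(n, m+1) \not\equiv 0 \pmod{x+1}$ for every $n > N_0 + 1$. Setting $N_0' = N_0 + 1$ and invoking the Corollary to Lemma~\ref{lem:lem21} then yields $D_p(n) = n$ for all $n > N_0'$, so $p$ is evasive.

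The argument is almost entirely bookkeeping; the only point that requires care --- and the closest thing to an obstacle --- is the index shift, since the biconditional relates state $m+1$ at length $n+1$ to state $m$ at length $n$, so one must track the off-by-one when passing from the hypothesis on $g_p(\cdot, m)$ to the input $g_p(\cdot, m+1)$ demanded by the Corollary. I would also confirm explicitly that $y$ reduces to a genuine unit rather than to $0$ modulo $(x+1)$, which is precisely the mechanism by which the accepting-state polynomial inherits nonvanishing from the penultimate state.
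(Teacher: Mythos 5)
Your proposal is correct and follows essentially the same route as the paper, which derives the lemma directly from the displayed congruence $g_p(n+1,m+1)\equiv y\cdot g_p(n,m) \pmod{x+1}$ together with the observation that $y\equiv\pm 1$ is a unit, and then feeds the resulting nonvanishing into the corollary of Lemma~\ref{lem:lem21}. Your explicit attention to the index shift $n\mapsto n+1$ is a small bookkeeping point the paper glosses over, but it changes nothing of substance.
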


Now we define the transition matrix. Given a pattern string $p$ of length $m$, we can express $g_p(n + 1, 1), \ldots, g_p(n + 1, m)$ in terms of $g_p(n, 1), \ldots, g_p(n, m)$. For example, when $p = 1010$, according to the automata in Figure~\ref{fig:F1}, we can write
\[
\left(
  \begin{array}{c}
    g_p(n + 1, 1) \\
    g_p(n + 1, 2) \\
    g_p(n + 1, 3) \\
    g_p(n + 1, 4) \\
  \end{array}
\right)
=
\left(
  \begin{array}{cccc}
    1 & 0 & 1 & 0 \\
    x & x & 0 & x \\
    0 & 1 & 0 & 0 \\
    0 & 0 & x & 0 \\
  \end{array}
\right)
\left(
  \begin{array}{c}
    g_p(n, 1) \\
    g_p(n, 2) \\
    g_p(n, 3) \\
    g_p(n, 4) \\
  \end{array}
\right)
\]

Since we are only interested in these values modulo $x+1$, we may plug in $x = -1$ into all these terms. We denote $\overline{g}_p(n, i)$ to be the value obtained by plugging $x = -1$ into $g_p(n, i)$. In the previous example where $p = 1010$, we will obtain
\[
\left(
  \begin{array}{c}
    \overline{g}_p(n + 1, 1) \\
    \overline{g}_p(n + 1, 2) \\
    \overline{g}_p(n + 1, 3) \\
    \overline{g}_p(n + 1, 4) \\
  \end{array}
\right)
=
\left(
  \begin{array}{cccc}
    1 & 0 & 1 & 0 \\
    -1 & -1 & 0 & -1 \\
    0 & 1 & 0 & 0 \\
    0 & 0 & -1 & 0 \\
  \end{array}
\right)
\left(
  \begin{array}{c}
    \overline{g}_p(n, 1) \\
    \overline{g}_p(n, 2) \\
    \overline{g}_p(n, 3) \\
    \overline{g}_p(n, 4) \\
  \end{array}
\right)
\]

We call the matrix on the right hand side of the above equation the \emph{transition matrix} of the pattern string $p = 1010$. In general, given a pattern string $p$, we write down the recurrence relation of $g_p(n, i)$ in the matrix form and plug in $x=-1$, and the resulting matrix will be our transition matrix. Let $T_p$ denote the transition matrix for pattern $p$. 

We will see later that the eigenvalues of $T_p$ are of great use to us. We establish the following lemma using the characteristic polynomial of $T_p$. In the remaining part of this paper we will refer to the characteristic polynomial of $T_p$ as characteristic polynomial of the pattern $p$.
\begin{lemma}\label{lem:lemch}
Let $p$ be a pattern of length $m$. Let $P(\lambda) = \lambda^m + c_{m-1}\lambda^{m-1} + \cdots + c_0$ be the characteristic polynomial of $p$, then we have the recurrence relation
\begin{equation}\label{eq:eqch}
\overline{g}_p(n + m, m) + c_{m - 1}\overline{g}_p(n + m - 1, m) + \cdots + c_0\overline{g}_p(n, m) = 0.
\end{equation}
\end{lemma}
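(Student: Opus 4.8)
The plan is to recognize this as a direct consequence of the Cayley--Hamilton theorem applied to the transition matrix $T_p$. The matrix $T_p$ was constructed precisely so that the vector $\mathbf{v}(n) := (\overline{g}_p(n,1), \ldots, \overline{g}_p(n,m))^T$ obeys the first-order linear recurrence $\mathbf{v}(n+1) = T_p\,\mathbf{v}(n)$: for each $1 \le i \le m$, the quantity $\overline{g}_p(n+1,i)$ is expressed as a fixed linear combination of $\overline{g}_p(n,1), \ldots, \overline{g}_p(n,m)$. It is worth noting that no component $\overline{g}_p(n, m+1)$ appears, because the accepting state $q_{m+1}$ is absorbing and hence never contributes back to the non-accepting components; this is exactly why $T_p$ can be taken $m \times m$ and homogeneous (no constant terms). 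Iterating the recurrence gives $\mathbf{v}(n+k) = T_p^k\,\mathbf{v}(n)$ for every $k \ge 0$, which I would justify by a one-line induction on $k$.

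First I would form the linear combination appearing in the claimed recurrence, but at the level of vectors rather than scalars:
\[
\mathbf{v}(n+m) + c_{m-1}\mathbf{v}(n+m-1) + \cdots + c_0\,\mathbf{v}(n).
\]
Substituting $\mathbf{v}(n+k) = T_p^k\,\mathbf{v}(n)$ and factoring out $\mathbf{v}(n)$ on the right rewrites this as $\left(T_p^m + c_{m-1}T_p^{m-1} + \cdots + c_0 I\right)\mathbf{v}(n) = P(T_p)\,\mathbf{v}(n)$, where $P(\lambda) = \lambda^m + c_{m-1}\lambda^{m-1} + \cdots + c_0$ is by definition the characteristic polynomial of $T_p$, i.e.\ of $p$. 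By the Cayley--Hamilton theorem $P(T_p) = 0$, so this entire vector combination is the zero vector for every $n$. Reading off its $m$-th coordinate yields exactly Equation~\eqref{eq:eqch}.

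There is no serious obstacle here: the argument is the standard reduction of a vector-valued first-order recurrence to a scalar linear recurrence of order $m$ via Cayley--Hamilton. The only points requiring care are, first, confirming that a single matrix $T_p$ governs all $m$ components simultaneously and homogeneously --- which is precisely how $T_p$ was defined in the preceding paragraphs --- and second, observing that the coefficients $c_i$ are scalars, so they commute with $T_p$ and the factoring step producing $P(T_p)$ is legitimate. Extracting the $m$-th component at the end is valid because the vector identity holds coordinatewise, and the claim concerns only the single sequence $\overline{g}_p(\cdot, m)$.
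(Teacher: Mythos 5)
Your proposal is correct and follows essentially the same route as the paper: apply Cayley--Hamilton to $T_p$, multiply by the vector $(\overline{g}_p(n,1),\ldots,\overline{g}_p(n,m))^t$ using $T_p\,\overline{g}_p(n)=\overline{g}_p(n+1)$, and read off the last coordinate. Your additional remarks (the absorbing accepting state keeping $T_p$ homogeneous and $m\times m$, and the scalars commuting with $T_p$) are sound but do not change the argument.
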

\begin{proof}
By the Cayley-Hamilton theorem (see Theorem 5.2.3 in \cite{artin_algebra_2011}), we have
\[
T_p^m + c_{m - 1}T_p^{m - 1} + \cdots + c_0I = 0,
\]
where $I$ is the identity matrix. Right multiply both sides by column vector
\[
\overline{g}_p(n) = (\overline{g}_p(n, 1),\, \overline{g}_p(n, 2),\, \ldots,\, \overline{g}_p(n, m))^t,
\]
we obtain
\[
\overline{g}_p(n + m) + c_{m - 1}\overline{g}_p(n + m - 1) + \cdots + c_0\overline{g}_p(n) = 0,
\]
since $T_p\overline{g}_p(n) = \overline{g}_p(n + 1)$. Both sides of the equation above are $m$-dimensional column vectors, and we get the desired recurrence relation by looking at the last row of both vectors.
\end{proof}
\subsection{The Skolem Problem and Finite Zeroes}\label{sec:subsec2}
Lemma~\ref{lem:lemch} gives us a tool to get around $\overline{g}_p(n, 1), \ldots, \overline{g}_p(n, m-1)$ and focus only on $\overline{g}_p(n, m)$. Now we are faced with the following problem:
\begin{quote}
Let $\{u_n\}$ be a linear recurrent sequence. Does there exist $N_0$ such that $u_n$ is non-zero for all $n > N_0$?
\end{quote}
This problem is very similar to the \emph{Skolem's Problem}, which can be stated as follows:
\begin{quote}
Let $\{u_n\}$ be a linear recurrent sequence. Does there exist $n$ such that $u_n = 0$?
\end{quote}

For a detailed survey of the Skolem's problem, readers are referred to \cite{halava_skolem_2005}. We will use the following result from \cite{halava_skolem_2005}, which partially solved our problem.

\begin{lemma}[\cite{halava_skolem_2005}]\label{lem:skolemlem}
Assume sequence $\{u_n\}_{n = 1}^\infty$ satisfies
\[
u_n = a_{m - 1}u_{n-1} + \cdots + a_1u_{n - m + 1} + a_0u_{n-m},
\]
where $a_0, a_1, \ldots, a_{m-1}$ are fixed integers. Also assume that $p(\lambda) = \lambda^m - a_{m-1}\lambda^{m-1} - \cdots - a_1\lambda - a_0$ has the decomposition
\[
p(\lambda) = (\lambda - \lambda_1)^{m_1}(\lambda - \lambda_2)^{m_2}\cdots(\lambda - \lambda_r)^{m_r},
\]
where $\lambda_1, \ldots, \lambda_r \in \mathbb{C}$ are distinct roots of $p(\lambda)$ and $|\lambda_1| \geq |\lambda_2| \geq \cdots \geq |\lambda_r|$. Then there exists $N_0 \in \mathbb{N}$ such that $u_n$ is non-zero for all $n > N_0$ if one of the following cases holds:
\begin{enumerate}
  \item $|\lambda_1| > |\lambda_2|$.
  \item $|\lambda_1| = |\lambda_2| > |\lambda_3|,\ \lambda_1 = \overline{\lambda_2}$.
  \item $|\lambda_1| = |\lambda_2| = |\lambda_3| > |\lambda_4|,\ \lambda_1 \in \mathbb{R},\ \lambda_2 = \overline{\lambda_3}$.
\end{enumerate}
\end{lemma}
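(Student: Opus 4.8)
The plan is to pass to the explicit closed form of the recurrence and then carry out a dominant-root analysis, organising the cases by how many roots attain the maximal modulus $\rho := |\lambda_1|$.

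First I would write down the general solution. Diagonalising the companion matrix into Jordan blocks (equivalently, expanding the rational generating function $\sum_n u_n z^n$ into partial fractions) yields
\[
u_n \;=\; \sum_{i=1}^{r} P_i(n)\,\lambda_i^{\,n},
\]
where each $P_i$ is a polynomial of degree at most $m_i-1$ fixed by the initial data. Passing to the minimal linear recurrence satisfied by $\{u_n\}$, I may assume every $P_i\not\equiv 0$ and read the hypotheses for this minimal recurrence. Because the $a_j$ are integers, the non-real roots come in conjugate pairs and $u_n$ is real, which forces $P_j=\overline{P_i}$ whenever $\lambda_j=\overline{\lambda_i}$. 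Dividing by $\rho^n$ and setting $v_n:=u_n/\rho^n$, the contribution of each root with $|\lambda_i|<\rho$ becomes a polynomial times the decaying factor $(\lambda_i/\rho)^n$; I record that this entire subdominant part is $O(n^{M} r^n)$ with $r:=\max_{|\lambda_i|<\rho}|\lambda_i|/\rho<1$, so it decays faster than any fixed negative power of $n$.

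In Case 1 a single root realises $\rho$; if it were non-real its conjugate would realise $\rho$ too, contradicting $|\lambda_1|>|\lambda_2|$, so $\lambda_1\in\mathbb{R}$ and $v_n=P_1(n)+O(n^{M}r^n)$. As $P_1$ is a nonzero real polynomial, $|P_1(n)|$ is eventually bounded below, hence $v_n\neq 0$ for all large $n$ and the claim follows. In Cases 2 and 3 the dominant roots include a conjugate pair $\rho e^{\pm i\theta}$ with $0<\theta<\pi$, contributing
\[
2\,\mathrm{Re}\!\left(P_1(n)\,e^{\,in\theta}\right) \;=\; 2\,|c|\,n^{d}\cos(n\theta+\alpha) + O(n^{d-1}),
\]
where $d=\deg P_1$ and $|c|e^{i\alpha}$ is the leading coefficient of $P_1$; in Case 3 there is in addition a real dominant root $\pm\rho$ contributing $(\pm 1)^n Q(n)$. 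The whole difficulty is to show that this oscillatory main term cannot be driven to zero by the remaining contributions.

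The crux, and the step I expect to be the main obstacle, is exactly this non-vanishing of the dominant oscillation. Soft asymptotics do not suffice: a conjugate pair whose argument is a rational multiple of $\pi$ genuinely produces zeros along an arithmetic progression (already $u_n=-u_{n-2}$ does), so one must use that the $\lambda_i$ are algebraic and that $e^{i\theta}=\lambda_1/\rho$ is not a root of unity. Under that nondegeneracy $\cos(n\theta+\alpha)$ never vanishes, and — this is the essential input — Baker's theorem on linear forms in logarithms yields an effective lower bound of the shape $|\cos(n\theta+\alpha)|\gg n^{-C}$. Since this polynomial lower bound decays far more slowly than the geometric error $O(n^{M}r^n)$ from the subdominant roots, for all large $n$ the dominant term wins and $v_n\neq 0$; in Case 3 the same estimate is applied to the relevant three-term linear form to preclude cancellation between $(\pm 1)^n Q(n)$ and the conjugate-pair contribution. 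Taming the near-cancellations produced by repeated roots (the $O(n^{d-1})$ term when $d\ge 1$) requires a slightly sharper version of the same Diophantine bound but no new idea, and I would follow the treatment in \cite{halava_skolem_2005}.
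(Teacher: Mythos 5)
The paper offers no proof of this lemma: it is imported, in slightly restated form, from Proposition 4.1 of \cite{halava_skolem_2005}, so the only thing to compare your attempt against is that source. Your sketch follows the same route the source does --- write $u_n=\sum_i P_i(n)\lambda_i^n$, divide by the dominant modulus, and control the oscillatory dominant contribution by effective Diophantine lower bounds (Baker's theorem on linear forms in logarithms for a dominant conjugate pair, and the Mignotte--Shorey--Tijdeman estimate for the three-dominant-root case, which is the ``three-term linear form'' you gesture at). Case 1 as you present it is complete, and the architecture of Cases 2 and 3 is the right one; the deep inputs are cited rather than proved, but that is unavoidable here and matches what the paper itself does.

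The genuine problem is that the statement as transcribed is false, so your proof can only succeed by silently strengthening the hypotheses --- which is exactly what you do at two points. First, the coefficients $P_i$ attached to the nominally dominant roots may vanish: the integer sequence with $u_n=2u_{n-1}-u_{n-2}+2u_{n-3}$ and $(u_1,u_2,u_3)=(1,0,-1)$ has characteristic polynomial $(\lambda-2)(\lambda^2+1)$, so Case 1 holds, yet $u_n=\sin(n\pi/2)$ vanishes on every even index; your move of ``passing to the minimal recurrence'' repairs this only by changing which roots the conditions 1--3 refer to, which is a change of statement, not a deduction. Second, even with $P_1\not\equiv 0$, a dominant conjugate pair whose ratio is a root of unity produces zeros along an arithmetic progression ($u_n=-u_{n-2}$ with $u_1=0$, $u_2=1$ satisfies Case 2 and vanishes on all odd indices). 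You correctly flag this and then ``assume that nondegeneracy,'' but it does not follow from conditions 1--3, so this gap cannot be closed as the lemma is stated. What you have is a correct outline of the true theorem (nondegenerate, not-identically-dominated-by-a-vanishing-part case); the useful thing your attempt exposes is that the paper's restatement of Proposition 4.1 has dropped the nondegeneracy caveat, and that the paper's application to $p=10^k1$ implicitly requires verifying that caveat for the roots of $\lambda^{k+2}-\lambda+1$.
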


The proof of this lemma can be found in Proposition 4.1 in \cite{halava_skolem_2005}. Note that our statement is a little bit different. In \cite{halava_skolem_2005} it is proved that the Skolem's problem is decidable in these cases, by showing that there exists an algorithmically computable constant $N_0$ such that $u_n \neq 0$ for all $n \geq N_0$, and therefore an algorithm for deciding the Skolem's problem only needs to check whether there are zeroes below the bound $N_0$.

Using this lemma, we can show the evasiveness of some pattern strings. As an example, we prove the following proposition.
\begin{prop}
The pattern $p = 10^k1$ is evasive when $k > 0$.
\end{prop}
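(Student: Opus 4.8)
The plan is to follow the algebraic route set up above: build the KMP automaton for $p=10^k1$, extract its transition matrix $T_p$, read off the characteristic polynomial, and then invoke the Skolem-type Lemma~\ref{lem:skolemlem} to conclude that $\overline{g}_p(n,m)$ is eventually nonzero, whence evasiveness follows from Lemma~\ref{lem:lem23}. Here $m=k+2$.

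First I would write down the automaton for $p=10^k1$. Since the only nontrivial failure transitions send a mismatched $1$ back to the state ``$1$'' and a trailing $0$ back to the initial state, the matrix $T_p$ (after substituting $x=-1$) is very sparse: reading $0$ shifts the states $q_2,\dots,q_{k+1}$ forward by one, state $q_{k+2}$ feeds back into $q_1$, and the single row for $q_2$ collects $-1$ from each of $q_1,\dots,q_{k+1}$. Expanding $\det(\lambda I-T_p)$ along the last column (equivalently, eliminating the coordinates $\overline{g}_p(n,i)$ with $i<m$ from the system of recurrences) yields the clean order-$(k+2)$ relation
\[
\overline{g}_p(n,m)=\overline{g}_p(n-k-1,m)-\overline{g}_p(n-k-2,m),
\]
so the characteristic polynomial of $p$ is the trinomial $P(\lambda)=\lambda^{k+2}-\lambda+1$. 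I would sanity-check this on $k=1,2$, where it gives $\lambda^3-\lambda+1$ and $\lambda^4-\lambda+1$. By Lemma~\ref{lem:lemch} the integer sequence $\overline{g}_p(n,m)$ obeys this recurrence, and by Lemma~\ref{lem:lem23} it suffices to prove that $\overline{g}_p(n,m)\neq 0$ for all sufficiently large $n$. This is exactly the ``finite zeroes'' question, so the task reduces to locating the roots of $P$ and verifying that the root(s) of maximal modulus fall into one of the three admissible configurations of Lemma~\ref{lem:skolemlem}.

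The main obstacle is precisely this root analysis. The key observation is that every root $z$ satisfies $z^{k+2}=z-1$, hence $|z|^{k+2}=|z-1|$; since $t\mapsto t^{1/(k+2)}$ is increasing, the roots of largest modulus are exactly those maximizing $|z-1|$, i.e. those lying closest to the negative real axis. I would use this to show that the dominant root cluster sits near the ray $\arg z=\pi$: when $k$ is odd, $P$ has a negative real root $\lambda_1\approx -2^{1/(k+2)}$, which I expect to be the unique root of maximal modulus (Case~1), and when $k$ is even, $P$ has no real root (for $\lambda\le 0$ all three terms are nonnegative with positive constant term, and for $\lambda>0$ the minimum of $P$ is easily seen to be positive), so the dominant roots should form a single complex-conjugate pair near that ray (Case~2).

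The delicate points, which I expect to absorb most of the work, are (i) ruling out a modulus tie between two distinct conjugate pairs, which would escape all three decidable cases, and (ii) confirming that the coefficient of the dominant root(s) in the spectral expansion of our particular sequence $\overline{g}_p(n,m)$ is nonzero, so that Lemma~\ref{lem:skolemlem} truly yields eventual nonvanishing rather than an accidental degeneracy. For (i) I would combine the quantitative estimate that moduli near $1$ satisfy $|z|=1+O(1/k)$ with a Rouch\'e or argument-principle count of the roots of $\lambda^{k+2}-\lambda+1$ in a thin annulus about the unit circle, isolating the near-$\pi$ root(s); for (ii) a short comparison against the initial values of $\overline{g}_p(n,m)$ should suffice.
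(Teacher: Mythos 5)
Your setup coincides with the paper's: the characteristic polynomial of $p=10^k1$ is indeed $\lambda^{k+2}-\lambda+1$, and the reduction via Lemma~\ref{lem:lemch} and Lemma~\ref{lem:lem23} to the eventual nonvanishing of $\overline{g}_p(n,m)$, to be certified by Lemma~\ref{lem:skolemlem}, is exactly the intended route. The genuine gap is the step you yourself flag as absorbing ``most of the work'': ruling out a modulus tie between two distinct conjugate pairs of roots, which would place the recurrence outside all three admissible cases. You propose a Rouch\'e/argument-principle count in a thin annulus together with a parity case split on $k$, but you do not carry any of this out, so as written the proof is incomplete precisely at the point where Lemma~\ref{lem:skolemlem} could fail to apply.

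That step has a one-line resolution which you nearly had in hand. From $z^{k+2}=z-1$ you correctly derive $|z|^{k+2}=|z-1|$; writing $z=re^{i\theta}$ this becomes $r^{2(k+2)}=r^2+1-2r\cos\theta$, i.e.\ $\cos\theta=\bigl(r^2+1-r^{2(k+2)}\bigr)/(2r)$. Hence the modulus $r$ of a root determines $\cos\theta$, so at most two roots (one conjugate pair, or a single real root) can share any given modulus. In particular no tie between two distinct conjugate pairs can occur: for every $k$ the maximal-modulus roots form either a single real root (Case 1 of Lemma~\ref{lem:skolemlem}) or one conjugate pair (Case 2), with no parity analysis or asymptotic estimates needed. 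Your concern (ii), that the dominant root's coefficient in the spectral expansion of $\overline{g}_p(n,m)$ might vanish, is not required by Lemma~\ref{lem:skolemlem} as the paper states and applies it, so it can be dropped from the proof of this proposition (it is at most a caveat about the lemma's phrasing).
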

\begin{proof}
To begin with, we calculate its characteristic polynomial, which is
\[
p(\lambda) = \det (\lambda I - T_p) = \lambda^{k+2} - \lambda + 1.
\]
We note that this is also the characteristic polynomial for the recurrence of $\overline{g}(n, k + 2)$ (here $|p| = k + 2$). Now assume $z = re^{i\theta}$ is a root of $p(\lambda) = 0$. Then we have $|z^{k + 2}| = |z - 1|$,
which implies
\[
r^{k + 2} = \sqrt{r^2 + 1 - 2r\cos{\theta}}.
\]
This shows that for every $r$ the value of $\cos \theta$ is determined, and therefore there can be at most 2 choices of $\theta$. Thus, the pattern $p$ either satisfies condition 1 or condition 2 in Lemma~\ref{lem:skolemlem}. We conclude that $p$ is evasive.
\end{proof}
\begin{remark}
The evasiveness of pattern $p = 10^k1$ is not covered by Tuza's Theorem. Though $p \in BE(b)$ where $b = 1$, the pattern $10^k110^k1$ has a substring $p' = 0^k11$ which differs from $p$ in only two positions, and thus violates the condition (b) in Theorem~\ref{thm:tuzathm}.
\end{remark}

\subsection{The Characteristic Polynomial and Periods}\label{sec:subsec3}
Writing down the characteristic polynomial through the transition matrix can sometimes be inefficient. Here we develop a faster way to calculate a pattern's characteristic polynomial and show some interesting connection to the periodic behavior of the pattern.

We give the following formula for a pattern's characteristic polynomial in terms of the pattern's periods.

\begin{theorem}\label{thm:polythm}
Let $p$ be a pattern of length $m$. Let $P(\lambda)$ be the characteristic polynomial of $p$, then we have
$
P(\lambda) = \lambda^m + c_{m-1}\lambda^{m-1} + \cdots + c_1\lambda + c_0,
$
where for $1 \leq k \leq m$,
 \[
    c_{m-k} = \left\{\begin{array}{cl}
        (-1)^{wt(p[1..k])}, & \text{if } k \text{ is a period of } p, \\
        0, & \text{otherwise.}
        \end{array} \right.
 \]
\end{theorem}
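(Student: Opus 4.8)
The plan is to compute the characteristic polynomial $P(\lambda)=\det(\lambda I-T_p)$ through its reciprocal. Writing
$\det(I-zT_p)=z^mP(1/z)=\sum_{k=0}^{m}c_{m-k}z^{k}$ (with $c_m:=1$), the coefficient of $z^k$ is exactly $c_{m-k}$, so it suffices to prove the single polynomial identity
\[
\det(I-zT_p)=\Delta(z):=\sum_{k\in\{0\}\cup\period(p)}(-1)^{wt(p[1..k])}z^{k},
\]
where the $k=0$ term is $1$ (since $wt$ of the empty string is $0$) and $m\in\period(p)$ always, supplying the top term $(-1)^{wt(p)}z^m$. First I would give the left-hand side a combinatorial meaning. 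By the transfer-matrix method, the weighted count $a_n=\sum_{i=1}^m\overline{g}_p(n,i)=\mathbf 1^{t}T_p^{\,n}e_1$ of length-$n$ words avoiding $p$, each weighted by $(-1)^{wt}$ (the $x=-1$ specialization), has generating function $A(z)=\mathbf 1^{t}(I-zT_p)^{-1}e_1=N(z)/\det(I-zT_p)$, with $N$ and $\det(I-zT_p)$ polynomials and $\det(I-zT_p)|_{z=0}=1$.

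Next I would pin down $A(z)$ independently through a weighted version of the Guibas--Odlyzko renewal equations. Let $B(z)$ be the generating function of words whose first occurrence of $p$ is a suffix. Appending a letter to an avoiding word gives $1+\bigl(\sum_{\ell}(-1)^{wt(\ell)}\bigr)z\,A(z)=A(z)+B(z)$; since the per-letter weights sum to $1+(-1)=0$ over $\{0,1\}$, this collapses to $B(z)=1-A(z)$. The correlation equation considers the words $wp$ with $w$ avoiding $p$ and decomposes each by the first occurrence of $p$: the tail beyond that first occurrence is forced, by the two overlapping copies of $p$, to have length equal to a \emph{period} $k<m$ of $p$, and to equal the suffix $p[m-k+1..m]$; this yields $A(z)\,(-1)^{wt(p)}z^m=B(z)\,\widetilde C(z)$ where $\widetilde C(z)=\sum_{k\in\{0\}\cup(\period(p)\setminus\{m\})}(-1)^{wt(p[m-k+1..m])}z^{k}$. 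A small but essential point is that for a period $k$ the prefix and suffix blocks $p[1..k]$ and $p[m-k+1..m]$ hit each residue class modulo $k$ exactly once and therefore have equal weight, so $\widetilde C(z)=\sum_k(-1)^{wt(p[1..k])}z^{k}$, matching the statement. Eliminating $B$ gives
$A(z)=\widetilde C(z)\big/\bigl(\widetilde C(z)+(-1)^{wt(p)}z^m\bigr)=\widetilde C(z)/\Delta(z)$.

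Finally I would match the two expressions for $A(z)$ and conclude that $\det(I-zT_p)$ equals $\Delta(z)$ rather than a proper multiple or a cancelled form. Cross-multiplying $N/\det(I-zT_p)=\widetilde C/\Delta$ gives the polynomial identity $\det(I-zT_p)\,\widetilde C(z)=N(z)\,\Delta(z)$. Now $\Delta(z)-\widetilde C(z)=(-1)^{wt(p)}z^m$, and $\widetilde C(0)=1$, so any common factor of $\widetilde C$ and $\Delta$ divides $z^m$ yet is coprime to $z$; hence $\gcd(\widetilde C,\Delta)=1$, which forces $\Delta\mid\det(I-zT_p)$. Comparing degrees (both at most $m$) and constant terms (both $1$) yields $\det(I-zT_p)=\Delta(z)$, as desired; one may also note here that the ``match'' transitions $q_i\to q_{i+1}$ put nonzero entries on the whole subdiagonal of $T_p$ while every failure transition $q_i\to q_{f(i)}$ has $f(i)\le i$, so $T_p$ is unreduced upper Hessenberg, hence nonderogatory, giving an alternative way to rule out cancellation. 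The main obstacle is the correlation equation: making the first-occurrence decomposition of $wp$ a genuine weight-preserving bijection in this signed setting, and verifying the exact correspondence between overlap tails and periods (including the prefix/suffix weight equality). The coprimality/degree step is then routine.
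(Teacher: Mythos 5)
Your proof is correct, but it takes a genuinely different route from the paper's. The paper proceeds by induction on the pattern length: a cofactor expansion of $\det(\lambda I - T_p)$ along the last column yields the recurrence $P_m(\lambda)=\lambda P_{m-1}(\lambda)-(-1)^{wt(p[1..k])}P_{m-k}(\lambda)$ (with a separate case for $k=m$), where $q_m$ fails back to $q_{m-k+1}$; a companion lemma about the KMP failure function tracks exactly which periods of $p[1..m-1]$ survive in $p[1..m]$, and the theorem then follows from a four-case term-by-term cancellation analysis. You instead identify the reversed polynomial $\det(I-zT_p)$ via the transfer-matrix generating function for $p$-avoiding words at the specialization $x=-1$, and pin it down using the signed Guibas--Odlyzko renewal equations; the key simplification is that the per-letter weights sum to zero, collapsing the first renewal equation to $B=1-A$, after which $A=\widetilde C/\Delta$ and the coprimality-plus-degree argument forces $\det(I-zT_p)=\Delta$. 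The steps you flag as needing care all check out: the first-occurrence decomposition of $wp$ is the standard unsigned bijection and is automatically weight-preserving because $wt$ is additive over concatenation, and the prefix/suffix weight equality $wt(p[1..k])=wt(p[m-k+1..m])$ for a period $k$ follows since both windows meet each residue class mod $k$ once. One minor caveat: the closing remark that $T_p$ being unreduced upper Hessenberg (hence nonderogatory, with $e_1$ cyclic) "rules out cancellation" is not by itself sufficient, since cancellation in $N/\det(I-zT_p)$ could still arise from the left vector $\mathbf 1^t$; but this is offered only as an aside, and your main $\gcd(\widetilde C,\Delta)=1$ argument is complete without it. Your approach is more conceptual---it explains the theorem as the statement that $z^mP(1/z)$ is the signed autocorrelation polynomial of $p$, and it generalizes to any weighting with letter weights summing to zero---at the cost of importing the renewal-equation machinery; the paper's induction is longer and more case-heavy but self-contained and yields the intermediate recurrence for $P_m$, which is useful for computing the polynomial incrementally.
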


In proving the above theorem, the following two lemmas will be useful.
\begin{lemma}\label{lem:polyl1}
Let $p$ be a pattern of length $m$. Assume that state $q_m$ of the KMP automaton for $p$ has a transition back to state $q_{m - k + 1}$ where $k \leq m$.
\begin{itemize}
  \item If $k = m$, then all patterns of $p[1..m-1]$ are preserved. That is to say, if $p[1..m-1]$ is $l$-periodic for some $l$, then $p[1..m]$ is also $l$-periodic.
  \item If $k < m$, then $k$ is the smallest of the periods of $p[1..m-1]$ which are destroyed. That is to say, $p[1..m-1]$ is $k$-periodic while $p[1..m]$ is not $k$-periodic, and furthermore, if $p[1..m-1]$ is $l$-periodic for some $l < k$, then $p[1..m]$ is also $l$-periodic.
\end{itemize}
\end{lemma}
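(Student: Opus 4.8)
The plan is to translate the automaton statement into a purely combinatorial statement about borders and periods, and then read off both bullets. The starting point is the following elementary observation about how periods behave when one appends a single character, which I would establish first: for every $\ell$ with $1 \leq \ell \leq m-1$, the string $p[1..m]$ is $\ell$-periodic if and only if $p[1..m-1]$ is $\ell$-periodic \emph{and} $p[m-\ell] = p[m]$. This is immediate from comparing the defining conditions, since the constraints ``$p[i]=p[i+\ell]$'' imposed by $\ell$-periodicity of $p[1..m]$ are exactly those imposed by $\ell$-periodicity of $p[1..m-1]$ together with the single new constraint at $i = m-\ell$. Hence a period $\ell \leq m-1$ of $p[1..m-1]$ survives in $p[1..m]$ exactly when $p[m-\ell]=p[m]$, and is ``destroyed'' exactly when $p[m-\ell]\neq p[m]$.

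Next I would make the transition precise. In the binary case, state $q_m$, which represents the matched prefix $p[1..m-1]$, has only one outgoing edge besides the forward edge on $p[m]$ to the accepting state, namely the edge on $\overline{p[m]}$; this is the ``transition back''. By the defining property of the KMP automaton it lands in $q_{r+1}$, where $r$ is the length of the longest prefix of $p$ that is a (necessarily proper) suffix of $p[1..m-1]\overline{p[m]}$, so that $r = m-k$. The heart of the argument is to rewrite this suffix-prefix condition in terms of periods. For $s \geq 1$, the prefix $p[1..s]$ is a suffix of $p[1..m-1]\overline{p[m]}$ if and only if its last character agrees, i.e.\ $p[s]=\overline{p[m]}\neq p[m]$, and its first $s-1$ characters satisfy $p[1..s-1] = p[m-s+1..m-1]$, i.e.\ $p[1..s-1]$ is a bifix of $p[1..m-1]$. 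By Lemma~\ref{lem:bifixprop} the latter says precisely that $m-s$ is a period of $p[1..m-1]$. Writing $\ell = m-s$, the prefix $p[1..s]$ thus qualifies if and only if $\ell \in \period(p[1..m-1])$ and $p[m-\ell]\neq p[m]$, that is, if and only if $\ell$ is a destroyed period of $p[1..m-1]$.

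Taking the longest qualifying prefix corresponds to the largest $s$, hence the smallest such $\ell$. Therefore $r = m - \min\{\ell : \ell \text{ is a destroyed period of } p[1..m-1]\}$ when destroyed periods exist, and $r=0$, the fallback to the empty prefix, when none do; equivalently $k = m-r$ is the smallest destroyed period, or $k=m$ if no period is destroyed. The two bullets now follow. If $k=m$, then no period of $p[1..m-1]$ is destroyed, so by the extension observation every period of $p[1..m-1]$ (all being $\leq m-1$) is preserved in $p[1..m]$, giving the first bullet. If $k<m$, then $k$ is the smallest destroyed period, so period $k$ of $p[1..m-1]$ fails for $p[1..m]$ while every period $\ell < k$, not being destroyed, survives; this is the second bullet.

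I expect the only delicate point to be the translation in the second paragraph: keeping the index bookkeeping straight in the bifix/period correspondence and correctly incorporating the mismatched final character $\overline{p[m]}$, along with the boundary cases $s=0$ (empty prefix, forcing $k=m$) and $s=1$ (the trivial period $m-1$ of $p[1..m-1]$). Everything else reduces to a direct comparison of the defining conditions of periodicity.
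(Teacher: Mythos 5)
Your proof is correct and follows essentially the same route as the paper's: both translate the failure transition of the KMP automaton into the statement that $p[1..m-k]$ is the longest prefix of $p$ that is a suffix of $p[1..m-1]\overline{p[m]}$, and then use the maximality of that prefix together with the observation that appending $p[m]$ destroys a period $\ell$ of $p[1..m-1]$ exactly when $p[m-\ell]\neq p[m]$. Your write-up is somewhat more explicit (the ``destroyed period'' characterization and the boundary cases are spelled out, whereas the paper leaves them implicit), but no new idea is involved.
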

\begin{proof}
We first examine the case where $k < m$. In this case, we have $p[1..m-k] = p[(k+1)..(m-1)]\overline{p[m]}$. This implies that $p[1..m-k-1] = p[(k+1)..(m-1)]$, and by our definition of periodicity, this means that $p[1..m-1]$ is $k$-periodic. However, since $p[m-k] = \overline{p[m]} \neq p[m]$, $p$ is not $k$-periodic. Notice that the KMP algorithm, when fails to match, will always find the largest $k$ such that $p[1..m-k] = p[(k+1)..(m-1)]\overline{p[m]}$, this proved the ``furthermore" part in the second bulletin.

For the case where $k = m$, we simply observe that if $p[1..m-1]$ is $l$-periodic for some $l \geq 1$ but $p[1..m]$ is not, the KMP automaton in state $q_m$ can transit to state $q_{m - l + 1}$ which is a better choice than state $q_1$.
\end{proof}

\begin{lemma}\label{lem:polyl2}
Let $p$ be a pattern of length $m$. Let $P_i(\lambda)$ be the characteristic polynomial of the pattern $p[1..i]$. Assume that state $q_m$ of the KMP automaton for $p$ has a transition back to state $q_{m - k + 1}$ where $1 \leq k \leq m$. Then
\[
P_m(\lambda) = \left\{\begin{array}{ll}
        \lambda P_{m-1}(\lambda) - (-1)^{wt(p[1..k])}P_{m - k}(\lambda), & \text{if } k < m, \\
        \lambda P_{m-1}(\lambda) + (-1)^{wt(p[1..m])}, & \text{if } k = m.
        \end{array} \right.
\]
\end{lemma}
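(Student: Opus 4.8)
The plan is to compute $P_m(\lambda)=\det(\lambda I - T_p)$ directly by Laplace expansion, exploiting the sparse structure of the transition matrix $T_p$ coming from the KMP automaton. Reading the correct character $p[i]$ out of state $q_i$ contributes the subdiagonal entry $(T_p)_{i+1,i}=(-1)^{p[i]}$ for $1\le i\le m-1$, while each mismatch contributes a single \emph{failure} entry $(T_p)_{f(j),j}$ with $f(j)\le j$; thus $\lambda I - T_p$ is lower-Hessenberg with a nonzero subdiagonal and all failure entries on or above the diagonal. Two structural facts drive the argument. First, column $m$ of $T_p$ has exactly one nonzero entry, the failure transition from $q_m$ back to $q_{m-k+1}$, since the advance out of $q_m$ leaves the tracked states $q_1,\dots,q_m$. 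Second, for every $j$ the top-left $j\times j$ principal block of $T_p$ equals $T_{p[1..j]}$, because the transitions among $q_1,\dots,q_j$ depend only on the prefix $p[1..j]$ and the sole transition leaving this block is the advance out of $q_j$.

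I would expand $P_m$ along column $m$. The diagonal term is $\lambda\cdot M_{m,m}$, and deleting row $m$ and column $m$ exposes the top-left $(m-1)\times(m-1)$ block, so $M_{m,m}=P_{m-1}$; this produces the $\lambda P_{m-1}$ summand in both cases. The remaining work is the failure term $(\lambda I - T_p)_{m-k+1,m}\,(-1)^{(m-k+1)+m}\,M_{m-k+1,m}$ (when $k\ge 2$), whose scalar coefficient is $(-1)^{p[m]}(-1)^{k-1}$. To evaluate the minor $M_{m-k+1,m}$ I would repeatedly expand along the current bottom row. The key claim is that this bottom row always reduces to its unique subdiagonal entry, which after the deletions sits in the bottom-right corner, so each expansion contributes sign $+1$ together with a factor $(\lambda I - T_p)_{r,r-1}=-(-1)^{p[r-1]}$. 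The peeling proceeds down rows $m,m-1,\dots,m-k+2$, removing columns $m-1,\dots,m-k+1$, until the surviving matrix is exactly the top-left $(m-k)\times(m-k)$ block $\lambda I - T_{p[1..m-k]}$, of determinant $P_{m-k}$. This gives $M_{m-k+1,m}=(-1)^{k-1}(-1)^{wt(p[(m-k+1)..(m-1)])}P_{m-k}$, and after the two copies of $(-1)^{k-1}$ cancel the failure term becomes $(-1)^{p[m]+wt(p[(m-k+1)..(m-1)])}P_{m-k}$.

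The last step is to show this equals $-(-1)^{wt(p[1..k])}P_{m-k}$, where I would invoke the periodicity from Lemma~\ref{lem:polyl1}: since $q_m$ transitions to $q_{m-k+1}$, the prefix $p[1..m-1]$ is $k$-periodic and $p[m-k]=\overline{p[m]}$. Because $p$ restricted to $[1,m-1]$ depends only on the residue modulo $k$, the length-$(k-1)$ window $p[(m-k+1)..(m-1)]$ covers every residue class except that of $m$, whose representative carries value $p[m-k]=\overline{p[m]}$; hence $wt(p[1..k])=wt(p[(m-k+1)..(m-1)])+\overline{p[m]}$, which converts the exponent as required. For the case $k=m$ the failure entry lands in row $1$, the analogous peeling removes all of rows $2,\dots,m$ and leaves the empty $0\times 0$ determinant $1$, so $M_{1,m}$ is simply the product of all subdiagonal factors; no periodicity is needed there because the exponent collapses directly to $wt(p[1..m])$.

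The hard part will be the sign bookkeeping in the iterated expansion: I must verify rigorously that at each step the surviving bottom row really is reduced to a single subdiagonal entry within the remaining columns (no failure entry can survive, since a failure into a row $r$ would force $f(j)=r>j$), and that this entry is always the bottom-right corner, so that the cofactor sign is uniformly $+1$. Once the uniform $+1$ sign and the periodicity-based weight identity are in hand, the two cases assemble into the stated recurrence. A small separate check is the degenerate sub-case $k=1$, where the failure entry coincides with the diagonal of column $m$; there the single cofactor term $(\lambda+(-1)^{p[m]})P_{m-1}$ already matches the formula once one observes that $k=1$ forces $p[m]\neq p[1]$.
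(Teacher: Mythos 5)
Your proposal is correct and follows essentially the same route as the paper: expand $\det(\lambda I - T_p)$ along the last column, use the fact that the top-left principal blocks are the transition matrices of the prefixes to invoke induction, and convert the weight exponent via the $k$-periodicity of $p[1..m-1]$ and $p[m-k]=\overline{p[m]}$. The only cosmetic difference is that you evaluate the minor $M_{m-k+1,m}$ by iteratively peeling off bottom rows, whereas the paper uses a block-triangular factorization of the same minor; both yield the identical factor $(-1)^{k-1}(-1)^{wt(p[(m-k+1)..(m-1)])}P_{m-k}(\lambda)$.
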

\begin{proof}
Let's consider the transition matrix $T_p$ of $p$. Note that since in the KMP automaton state $q_i$ has no transition to state $q_{i + j}$ for $j \geq 2$, the entries below the subdiagonal are all zero in $T_p$. Also, since state $q_i$ always has a transition to state $q_{i + 1}$, the entries on the subdiagonal are all non-zero.

Now we calculate the characteristic polynomial of $p$, which is $\det (\lambda I - T_p)$. In order to calculate this determinant, we use expansion by minors on the $n$-th column. To simplify notation we let $A = \lambda I - T_p$, $a_{ij}$ be the entry of $A$ appearing in the $i$-th row and $j$-th column, and $A_{ij}$ be the $(m - 1) \times (m - 1)$ matrix obtained by removing the $i$-th row and $j$-th column from $A$. We note that there is a change in sign for the terms from $T_p$ in $A$.

Since every state has at most two transitions, each column has at most two non-zero entries. For the $m$-th column, one will be $\lambda$ on the diagonal, while the other is 1 or -1 on the $(m - k + 1)$-th row representing the transition from state $q_m$ back to state $q_{m - k + 1}$ (these two may in fact be in the same entry when $k=1$). Thus our expansion consists of two terms. The first term is $\lambda$ multiplied by the $\det A_{mm}$. By induction hypothesis, $\det A_{mm} = P_{m-1}(\lambda)$, and therefore the first term will be $\lambda P_{m-1}(\lambda)$. The second term, which is $(-1)^{(m - k + 1) + m}a_{(m-k+1)m}\det A_{(m-k+1)m}$, is slightly more cumbersome. We analyze it in two cases.

\begin{itemize}
  \item $k < m$. In this case, we have $m - k + 1 \geq 2$. The entry $a_{(m-k+1)m}$, which comes from the transition from state $q_m$ back to state $q{m - k + 1}$, can be computed by
  \[
    a_{(m-k+1)n} = (-1) \times (-1)^{wt(p[m-k])}
  \]
  The first (-1) factor comes from the change of sign from $T_p$ to $A$, as noted earlier, and the second term comes from the fact that $\overline{p[m]} = p[m-k]$. To compute $\det A_{(m-k+1)m}$, we partition $A_{(m-k+1)m}$ into blocks as follows.
  \[
    A_{(m-k+1)m} = \left(
               \begin{array}{cc}
                 X & Y \\
                 0 & Z \\
               \end{array}
             \right)
  \]
  Here $X$ is of size $(m-k) \times (m-k)$ and $Z$ is of size $(k - 1) \times (k - 1)$. We have shown earlier that the entries below the subdiagonal in $A$ are all zeroes, so the lower left block of our partition consists entirely of zeroes. Also, $Z$ will be a diagonal matrix, with entries of 1 and -1 in its diagonal. Therefore we have
  \[
  \det Z = (-1)^{k-1} \times (-1)^{wt(p[(m-k+1)..(m-1)])},
  \]
  where the first factor comes from the change in sign and the second factor from $T_p$.
  As for $X$, by induction hypothesis, we have $\det X = P_{m-k}(\lambda)$. So we have
  \[
   \det  A_{(m-k+1)m} = \det X \det Z = (-1)^{k - 1 + wt(p[(m-k+1)..(m-1)])}P_{m-k}(\lambda).
  \]
  So the second term in our expansion will be
  \begin{align*}
    & (-1)^{(m - k + 1) + m}a_{(m-k+1)m}\det A_{(m-k+1)m} \\
    = & (-1)^{m - k + 1 + m + 1 + wt(p[m-k]) + k - 1 + wt(p[(m-k + 1)..(n-1)])}P_{m-k}(\lambda) \\
    = & (-1)^{1 + wt(p[1..k])}P_{m-k}(\lambda)
  \end{align*}
  The last equality comes from the fact that $p[m-1]$ is $k$-periodic (by Lemma~\ref{lem:polyl1}) and $wt(p[(m-k)..(m-1]) = wt(p[1..k])$. Thus we have
  \[
    P_m(\lambda) = P_{m-1}(\lambda) - (-1)^{wt(p[1..k])}P_{m - k}(\lambda).
  \]
  \item $k = m$. In this case, $A_{(m-k+1)m} = A_{1m}$ will be a diagonal matrix, and we have
  \[
    \det A_{1m} = (-1)^{m-1} \times (-1)^{wt(p[1..m-1])}
  \]
   and
  \[
    a_{1m} = (-1) \times (-1)^{1 - wt(p[m])}.
  \]
  So the second term in our expansion will be
  \begin{align*}
    & (-1)^{1 + m}a_{1n}\det A_{1m} \\
    = & (-1)^{1 + m + 1 + 1 - wt(p[m]) + m - 1 + wt(p[1..m-1])} \\
    = & (-1)^{wt(p[1..m])}
  \end{align*}
  and therefore
  \[
  P_m(\lambda) = \lambda P_{m-1}(\lambda) + (-1)^{wt(p[1..m])}.
  \]

\end{itemize}
\end{proof}

\begin{proof}[Proof of Theorem~\ref{thm:polythm}]
We use mathematical induction in this proof. The basis case where $m=1$ is easy to verify. Now we assume that $m > 1$ and the theorem holds for patterns with length smaller than $n$. If $k = m$, then by Lemma~\ref{lem:polyl2} we have that
\[
  P_m(\lambda) = \lambda P_{m-1}(\lambda) + (-1)^{wt(p[1..m])}.
\]
By the first bulletin in Lemma~\ref{lem:polyl1}, all periods in $p[1..m-1]$ are preserved in this case, and by induction hypothesis, they are represented in $P_{m-1}(\lambda)$. We note that if $p$ is $l$-periodic for some $l < m$, then $p[1..m-1]$ must be $l$-periodic as well, so there will be a term $(-1)^{wt(p[1..l])}\lambda^{m - 1 - l}$ in $P_{m-1}(\lambda)$, and therefore a term $(-1)^{wt(p[1..l])}\lambda^{m - l}$ in $\lambda P_{m-1}(\lambda)$. Besides the periods of $p[1..m-1]$, $p$ is also $m$-periodic, and this is represented in the term $(-1)^{wt(p[1..m])}$.

Now we analyze the case where $k < m$. Again, we apply Lemma~\ref{lem:polyl2} and write
\[
P_m(\lambda) = \lambda P_{m-1}(\lambda) - (-1)^{wt(p[1..k])}P_{m - k}(\lambda).
\]
Let $l$ be an integer with $1 \leq l \leq m$. We want to show that $p$ is $l$-periodic if and only if there is a term $(-1)^{wt(p[1..l])}\lambda^{m-l}$ in $P_m(\lambda)$.

\begin{itemize}
  \item $l = m$. By definition, $p$ is $m$-periodic. The corresponding term is the constant term, which can only come from $-(-1)^{wt(p[1..k])}P_{m - k}(\lambda)$. By induction, the constant term in $P_{m-k}(\lambda)$ is $(-1)^{wt(p[1..m-k])}$. So the constant term of $P_n(\lambda)$ will be
      $(-1)^{1 + wt(p[1..k]) + wt(p[1..m-k])}$. By Lemma~\ref{lem:polyl1}, $p[1..m-1]$ is $k$-periodic and $p$ is not, which means $p[1..m-1]\overline{p[m]}$ must be $k$-periodic. This implies
      \[
      (-1)^{1 + wt(p[1..k])} = (-1)^{1 + wt(p[(m-k+1)..m-1]\overline{p[m]})} = (-1)^{wt(p[(m-k+1)..m])}.
      \]
      Thus the constant term will be
      \[
      (-1)^{1 + wt(p[1..k]) + wt(p[1..m-k])} = (-1)^{wt(p[1..m])}.
      \]

  \item $k < l < m$. This case is a little bit trickier. First of all, assume $p$ is $l$-periodic, and therefore $p[1..m-1]$ is also $l$-periodic. By induction hypothesis, this implies that there is a term $(-1)^{wt(p[1..l])}\lambda^{m-l}$ in $\lambda P_{m-1}(\lambda)$. We need to show that this term does not appear in $P_{m - k}(\lambda)$. Suppose it does, then we know that, since $P_{m - k}(\lambda)$ is of degree $m-k$, $m - k - ( m - l ) = l - k$ must be a period of $p[1..m-k]$. So we have $p[m] = p[m - l] = p[m - k]$, where the first equality is from the $l$ period of $p$ and the second is from the $l-k$ period of $p[1..m-k]$. However, by hypothesis, adding $p[m]$ will destroy the $k$ period of $p[1..m-1]$ and therefore $p[m] \neq p[m-k]$, which leads to a contradiction.

      On the other hand, suppose that $p$ is not $l$-periodic. In this case, since $p[1..m-1]$ is $k$-periodic, it is easy to see that $p[1..m-1]$ is $l$-periodic if and only if $p[1..m - k]$ is $(l-k)$-periodic. So the two terms of degree $m-l$, if appearing in $P_m(\lambda)$, must appear together. When they appear together they will have coefficient
      \[
      (-1)^{wt(p[1..l])} - (-1)^{wt(p[1..k])}(-1)^{wt(p[1..l-k])},
      \]
      which is 0, since $p[1..l-k] = p[k + 1..l]$.
  \item $l = k$. By Lemma~\ref{lem:polyl1}, $p$ is not $k$-periodic but $p[1..m-1]$ is. This implies that there will be a term of $(-1)^{wt(p[1..k])}\lambda^{n-k}$ in $\lambda P_{m-1}(\lambda)$. This term will be cancelled out by the leading term in $(-1)^{wt(p[1..k])}P_{m - k}(\lambda)$, which has the same coefficient.
  \item $1 \leq l < k$. In this case, if $p$ has a period $l$, then the corresponding term in $P_m(\lambda)$ will have degree $m - l > m - k$, so it can only come from $\lambda P_{m-1}(\lambda)$. By Lemma~\ref{lem:polyl1}, $p[1..m-1]$ is $l$-periodic as well so such a term exists. On the other hand, if $p$ is not $l$-periodic, then by Lemma~\ref{lem:polyl1} $p[1..m-1]$ is not $l$-periodic as well, so such a term does not exist.
\end{itemize}

\end{proof}

\section{Conclusions}
In this paper, we determined the decision tree complexity of string matching problem for almost every string, except for those string the adversary method fails to give a lower bound, whose fraction is negligible. 

The algebraic approach in Section 4 further proves that a few of these strings are evasive. One open problem is to resolve the remaining cases.

The characteristic polynomial of a pattern $p$, which we encountered in Section~\ref{sec:subsec3}, might be of independent interest itself. We have shown that this polynomial is related to the pattern's periodic behaviour, and it will be interesting to investigate whether other properties of strings can be related to it.

Another natural extension is to consider randomized algorithms. All algorithms proposed in this paper are deterministic, and randomized complexity is still widely open.

\bibliographystyle{plain}
\bibliography{string_matching}
\end{document}